\documentclass{article}%
\usepackage{amsmath}
\usepackage{amsfonts}
\usepackage{amssymb}
\usepackage{graphicx}
\usepackage[ruled,vlined,linesnumbered]{algorithm2e}%
\usepackage{tikz}
\setcounter{MaxMatrixCols}{30}
%TCIDATA{OutputFilter=latex2.dll}
%TCIDATA{Version=5.50.0.2960}
%TCIDATA{CSTFile=LaTeX article (bright).cst}
%TCIDATA{Created=Monday, July 19, 2004 19:39:14}
%TCIDATA{LastRevised=Sunday, November 20, 2016 13:46:00}
%TCIDATA{<META NAME="GraphicsSave" CONTENT="32">}
%TCIDATA{<META NAME="SaveForMode" CONTENT="1">}
%TCIDATA{BibliographyScheme=Manual}
%TCIDATA{<META NAME="DocumentShell" CONTENT="Standard LaTeX\Blank - Standard LaTeX Article">}
%TCIDATA{Language=American English}
%BeginMSIPreambleData
\providecommand{\U}[1]{\protect\rule{.1in}{.1in}}
%EndMSIPreambleData
\newtheorem{theorem}{Theorem} [section]

\newtheorem{corollary}[theorem]{Corollary}

\newtheorem{example}[theorem]{Example}

\newtheorem{lemma}[theorem]{Lemma}

\newenvironment{proof}[1][Proof]{\noindent\textbf{#1.} }{\ \rule{0.5em}{0.5em}}
\begin{document}

\author{Vadim E. Levit\\Department of Computer Science\\Ariel University, ISRAEL\\levitv@ariel.ac.il
\and David Tankus\\Department of Software Engineering\\Sami Shamoon College of Engineering, ISRAEL\\davidt@sce.ac.il}
\title{Complexity results for generating subgraphs}
\date{}
\maketitle

\begin{abstract}
A graph $G$ is \textit{well-covered} if all its maximal independent sets are
of the same cardinality. Assume that a weight function $w$ is defined on its
vertices. Then $G$ is $w$\textit{-well-covered} if all maximal independent
sets are of the same weight. For every graph $G$, the set of weight functions
$w$ such that $G$ is $w$-well-covered is a \textit{vector space}, denoted
$WCW(G)$.

Let $B$ be a complete bipartite induced subgraph of $G$ on vertex sets of
bipartition $B_{X}$ and $B_{Y}$. Then $B$ is \textit{generating} if there
exists an independent set $S$ such that $S \cup B_{X}$ and $S \cup B_{Y}$ are
both maximal independent sets of $G$. In the restricted case that a generating
subgraph $B$ is isomorphic to $K_{1,1}$, the unique edge in $B$ is called a
\textit{relating edge}.

Deciding whether an input graph $G$ is well-covered is \textbf{co-NP}%
-complete. Therefore finding $WCW(G)$ is \textbf{co-NP}-hard. Deciding whether
an edge is relating is \textbf{NP}-complete. Therefore, deciding whether a
subgraph is generating is \textbf{NP}-complete as well.

In this article we discuss the connections among these problems, provide
proofs for \textbf{NP}-completeness for several restricted cases, and present
polynomial characterizations for some other cases.\medskip

\textbf{Keywords:} weighted well-covered graph; maximal independent set;
relating edge; generating subgraph; vector space.

\end{abstract}

\section{Introduction}

\subsection{Basic definitions and notation}

Throughout this paper $G$ is a simple (i.e., a finite, undirected, loopless
and without multiple edges) graph with vertex set $V(G)$ and edge set $E(G)$.

Cycles of $k$ vertices are denoted by $C_{k}$. When we say that $G$ does not
contain $C_{k}$ for some $k \geq3$, we mean that $G$ does not admit subgraphs
isomorphic to $C_{k}$. Note that these subgraphs are not necessarily induced.
Let $\mathcal{G}(\widehat{C_{i_{1}}},..,\widehat{C_{i_{k}}})$ be the family of
all graphs which do not contain $C_{i_{1}}$,...,$C_{i_{k}}$.

Let $u$ and $v$ be two vertices in $G$. The \textit{distance} between $u$ and
$v$, denoted $d(u,v)$, is the length of a shortest path between $u$ and $v$,
where the length of a path is the number of its edges. If $S$ is a non-empty
set of vertices, then the \textit{distance} between $u$ and $S$, is defined as
$d(u,S)=\min\{d(u,s):s\in S\}$.

For every positive integer $i$, denote
\[
N_{i}(S)=\{x\in V\left(  G\right)  :d(x,S)=i\},
\]
and
\[
N_{i}\left[  S\right]  =\{x\in V\left(  G\right)  :d(x,S)\leq i\}.
\]

If $S$ contains a single vertex, $v$, then we abbreviate $N_{i}(\{v\})$,
$N_{i}\left[  \{v\}\right]  $ to be $N_{i}(v)$, $N_{i}\left[  v\right]  $,
respectively. We denote by $G[S]$ the subgraph of $G$ induced by $S$. For
every two sets, $S$ and $T$, of vertices of $G$, we say that $S$
\textit{dominates} $T$ if $T\subseteq N_{1}\left[  S\right]  $.

\subsection{Well-covered graphs}

Let $G$ be a graph. A set of vertices $S$ is \textit{independent} if its
elements are pairwise nonadjacent. An independent set of vertices is
\textit{maximal} if it is not a subset of another independent set. An
independent set of vertices is \textit{maximum} if the graph does not contain
an independent set of a higher cardinality.

The graph $G$ is \textit{well-covered} if every maximal independent set is
maximum \cite{plummer:definition}. Assume that a weight function $w:V\left(
G\right)  \longrightarrow\mathbb{R}$ is defined on the vertices of $G$. For
every set $S\subseteq V\left(  G\right)  $, define 
\[
w(S)={\sum\limits_{s\in S}}w(s). 
\]
Then $G$ is $w$\textit{-well-covered} if all maximal independent
sets of $G$ are of the same weight.

The problem of finding a maximum independent set is \textbf{NP-}complete.
However, if the input is restricted to well-covered graphs, then a maximum
independent set can be found in polynomial time using the \textit{greedy
algorithm}. Similarly, if a weight function $w:V\left(  G\right)
\longrightarrow\mathbb{R}$ is defined on the vertices of $G$, and $G$ is
$w$-well-covered, then finding a maximum weight independent set is a
polynomial problem. There is an interesting application, where well-covered
graphs are investigated in the context of distributed $k$-mutual exclusion
algorithms \cite{yaka:coteries}.

The recognition of well-covered graphs is known to be \textbf{co-NP}-complete.
This is proved independently in \cite{cs:note} and \cite{sknryn:compwc}. In
\cite{cst:structures} it is proven that the problem remains \textbf{co-NP}%
-complete even when the input is restricted to $K_{1,4}$-free graphs. However,
the problem can be solved in polynomial time for $K_{1,3}$-free graphs
\cite{tata:wck13f,tata:wck13fn}, for graphs with girth $5$ at least
\cite{fhn:wcg5}, for graphs with a bounded maximal degree \cite{cer:degree},
for chordal graphs \cite{ptv:chordal}, and for graphs without cycles of
lengths $4$ and $5$ \cite{fhn:wc45}.

For every graph $G$, the set of weight functions $w$ for which $G$ is
$w$-well-covered is a vector space \cite{cer:degree}. That vector space is
denoted $WCW(G)$ \cite{bnz:wcc4}. Since recognizing well-covered graphs is
\textbf{co-NP}-complete, finding the vector space $WCW(G)$ of an input graph
$G$ is \textbf{co-NP}-hard. However, finding $WCW(G)$ can be done in
polynomial time when the input is restricted to graphs with a bounded maximal
degree \cite{cer:degree}, to graphs without cycles of lengths $4$, $5$ and $6$
\cite{lt:wwc456}, and to chordal graphs \cite{bn:wcchordal}.

\subsection{Generating subgraphs and relating edges}

Further we make use of the following notions, which have been introduced in
\cite{lt:wc4567}. Let $B$ be an induced complete bipartite subgraph of $G$ on
vertex sets of bipartition $B_{X}$ and $B_{Y}$. Assume that there exists an
independent set $S$ such that each of $S\cup B_{X}$ and $S\cup B_{Y}$ is a
maximal independent set of $G$. Then $B$ is a \textit{generating subgraph} of
$G$, and the set $S$ is a \textit{witness} that $B$ is generating. We observe
that every weight function $w$ such that $G$ is $w$-well-covered must
\textit{satisfy} the restriction $w(B_{X})=w(B_{Y})$.

If the generating subgraph $B$ contains only one edge, say $xy$, it is called
a \textit{relating edge}. In such a case, the equality $w(x)=w(y)$ is valid
for every weight function $w$ such that $G$ is $w$-well-covered.

Recognizing relating edges is known to be \textbf{NP-}complete \cite{bnz:wcc4}%
, and it remains \textbf{NP-}complete even when the input is restricted to
graphs without cycles of lengths $4$ and $5$ \cite{lt:relatedc4}. Therefore,
recognizing generating subgraphs is also \textbf{NP-}complete when the input
is restricted to graphs without cycles of lengths $4$ and $5$. However,
recognizing relating edges can be done in polynomial time if the input is
restricted to graphs without cycles of lengths $4$ and $6$ \cite{lt:relatedc4}%
, and to graphs without cycles of lengths $5$ and $6$ \cite{lt:wwc456}.

It is also known that recognizing generating subgraphs is a polynomial problem
when the input is restricted to graphs without cycles of lengths $4$, $6$ and
$7$ \cite{lt:wc4567}, to graphs without cycles of lengths $4$, $5$ and $6$
\cite{lt:wwc456}, and to graphs without cycles of lengths $5$, $6$ and $7$
\cite{lt:wwc456}.

\subsection{Introducing the problems under consideration}

The subject of this article is the following four problems and their interconnections.

\begin{itemize}
\item $\mathbf{WC}$\textbf{ problem}:\newline\textit{Input}: A graph
$G$.\newline\textit{Question}: Is $G$ well-covered?

\item $\mathbf{WCW}$\textbf{ problem}:\newline\textit{Input}: A graph
$G$.\newline\textit{Output}: The vector space $WCW(G)$.

\item $\mathbf{GS}$ \textbf{problem}:\newline\textit{Input}: A graph $G$, and
an induced complete bipartite subgraph $B$ of $G$.\newline\textit{Question}:
Is $B$ generating?

\item $\mathbf{RE}$\textbf{ problem}:\newline\textit{Input}: A graph $G$, and
an edge $xy\in E\left(  G\right)  $.\newline\textit{Question}: Is $xy$
relating?\newline
\end{itemize}

If we know the output of the $\mathbf{WCW}$ problem for a graph $G$, then we
know the output of the $\mathbf{WC}$ problem for the same $G$: The graph $G$
is well-covered if and only if $w\equiv1$ belongs to $WCW(G)$. Therefore, the
$\mathbf{WC}$ problem is not harder than the $\mathbf{WCW}$ problem. Let
$\Psi$ be a family of graphs. If the $\mathbf{WCW}$ problem can be solved in
polynomial time, when its input is restricted to $\Psi$, then also the
$\mathbf{WC}$ problem is polynomial, when its input is restricted to $\Psi$.
On the other hand, if the $\mathbf{WC}$ problem is \textbf{co}-\textbf{NP-}%
complete, when its input is restricted to $\Psi$, then the $\mathbf{WCW}$
problem is \textbf{co}-\textbf{NP-}hard, when its input is restricted to
$\Psi$.

A similar connection exists between the $\mathbf{GS}$ problem and the
$\mathbf{RE}$ problem, since the $\mathbf{RE}$ problem is a restricted case of
the $\mathbf{GS}$ problem. Therefore, for every family $\Psi$ of graphs, if
the $\mathbf{GS}$ problem can be solved in polynomial time, then the
$\mathbf{RE}$ problem can be solved in polynomial time as well, and if the
$\mathbf{RE}$ problem is \textbf{NP-}complete then the $\mathbf{GS}$ problem
is also \textbf{NP-}complete.

This article considers bipartite graphs, graphs with girth $6$ at least, and
$K_{1,4}$-free graphs. Although for bipartite graphs and graphs with girth $6$
at least, the $\mathbf{WC}$ problem is known to be solvable in polynomial
time, we prove that the $\mathbf{GS}$ problem is \textbf{NP-}complete. For
bipartite graphs, even the $\mathbf{RE}$ problem is \textbf{NP-}complete.
Additionally, \textbf{NP-}completeness of the $\mathbf{GS}$ problem for
$K_{1,4}$-free graphs is proved. We also present polynomial algorithms for the
$\mathbf{RE}$ problem, the $\mathbf{GS}$ problem, and the $\mathbf{WCW}$
problem in the case that the maximum degree of the input graph is bounded.

\section{\textbf{NP-}complete cases}

A \textit{binary variable} is a variable whose value is either $0$ or $1$. If
$x$ is a binary variable, then its \textit{negation} is denoted by
$\overline{x}$. Each of $x$ and $\overline{x}$ are called \textit{literals}.
Let $X=\{x_{1},...,x_{n}\}$ be a set of binary variables. A \textit{clause}
$c$ over $X$ is a set of literals belonging to $\{x_{1},\overline{x_{1}%
},...,x_{n},\overline{x_{n}}\}$ such that $c$ does not contain both a variable
and its negation. A \textit{truth assignment} is a function
\[
\Phi:\{x_{1},\overline{x_{1}},...,x_{n},\overline{x_{n}}\}\longrightarrow
\{0,1\}
\]
such that
\[
\Phi(\overline{x_{i}})=1-\Phi(x_{i})\text{ for each }1\leq i\leq n.
\]
A truth assignment $\Phi$ \textit{satisfies} a clause $c$ if $c$ contains at
least one literal $l$ such that $\Phi(l)=1$.

\subsection{Relating edges in bipartite graphs}

In this subsection we consider the following problems:

\begin{itemize}
\item \textbf{SAT problem}:\newline\textit{Input}: A set $X$ of binary
variables and a set $C$ of clauses over $X$.\newline\textit{Question}: Is
there a truth assignment for $X$ which satisfies all clauses of $C$?

\item \textbf{BWSAT problem}:\newline\textit{Input}: A set $X$ of binary
variables and two sets, $C_{1}$ and $C_{2}$, of clauses over $X$, such that
all literals of the clauses belonging to $C_{1}$ are variables, and all
literals of clauses belonging to $C_{2}$ are negations of variables.
\newline\textit{Question}: Is there a truth assignment for $X$, which
satisfies all clauses of $C_{1}\cup C_{2}$?\newline
\end{itemize}

By Cook-Levin's Theorem, the \textbf{SAT} problem is \textbf{NP-}complete. We
prove that the same holds for the \textbf{BWSAT }problem.

\begin{lemma}
\label{usat} The \textbf{BWSAT} problem is in \textbf{NP-}complete.
\end{lemma}

\begin{proof}
Obviously, the \textbf{BWSAT} problem is in \textbf{NP}. We prove its
\textbf{NP-}completeness by showing a reduction from the \textbf{SAT} problem.
Let
\[
I_{1}=(X=\{x_{1},...,x_{n}\},C=\{c_{1},...,c_{m}\})
\]
be an instance of the \textbf{SAT} problem. Define $Y=\{x_{1},...,x_{n}%
,y_{1},...,y_{n}\}$, where $y_{1},...,y_{n}$ are new variables. For every
$1\leq j\leq m$, let $c_{j}^{\prime}$ be the clause obtained from $c_{j}$ by
replacing $\overline{x_{i}}$ with $y_{i}$ for each $1\leq i\leq n$. Let
$C^{\prime}=\{c_{1}^{\prime},...,c_{m}^{\prime}\}$. For each $1\leq i\leq n$
define two new clauses, $d_{i}=\{x_{i},y_{i}\}$ and $e_{i}=\{\overline{x_{i}%
},\overline{y_{i}}\}$. Let $D=\{d_{1},...,d_{n}\}$ and $E=\{e_{1},...,e_{n}%
\}$. Obviously, all literals of $C^{\prime}\cup D$ are variables, and all
literals of $E$ are negations of variables. Hence, $I_{2}=(Y,C^{\prime}\cup
D,E)$ is an instance of the \textbf{BWSAT} problem, see Example \ref{example1}%
. It remains to prove that $I_{1}$ and $I_{2}$ are equivalent.

Assume that $I_{1}$ is a positive instance of the \textbf{SAT} problem. There
exists a truth assignment
\[
\Phi_{1}:\{x_{1},\overline{x_{1}},...,x_{n},\overline{x_{n}}\}\longrightarrow
\{0,1\}
\]
which satisfies all clauses of $C$. Extend $\Phi_{1}$ to a truth assignment
\[
\Phi_{2}:\{x_{1},\overline{x_{1}},...,x_{n},\overline{x_{n}},y_{1}%
,\overline{y_{1}},...,y_{n},\overline{y_{n}}\}\longrightarrow\{0,1\}
\]
by defining $\Phi_{2}(y_{i})=1-\Phi_{1}(x_{i})$ for each $1\leq i\leq n$.
Clearly, $\Phi_{2}$ is a truth assignment which satisfies all clauses of
$C^{\prime}\cup D\cup E$. Hence, $I_{2}$ is a positive instance of the
\textbf{BWSAT} problem.

Assume $I_{2}$ is a positive instance of the \textbf{BWSAT} problem. There
exists a truth assignment
\[
\Phi_{2}:\{x_{1},\overline{x_{1}},...,x_{n},\overline{x_{n}},y_{1}%
,\overline{y_{1}},...,y_{n},\overline{y_{n}}\}\longrightarrow\{0,1\}
\]
that satisfies all clauses of $C^{\prime}\cup D\cup E$. For every $1\leq i\leq
n$ it holds that $\Phi_{2}(y_{i})=1-\Phi_{2}(x_{i})$, or otherwise one of
$d_{i}$ and $e_{i}$ is not satisfied. Therefore, $I_{1}$ is a positive
instance of the \textbf{SAT} problem.
\end{proof}

\begin{example}
\label{example1} The following contains both an instance of the \textbf{SAT}
problem and an equivalent instance of the \textbf{BWSAT} problem.

$I_{1}=(X,C)$, where $X=\{x_{1},x_{2},x_{3},x_{4},x_{5}\}$,\newline%
$C=\{\{x_{1},\overline{x_{2}},x_{3}\},\{x_{1},x_{3},x_{4},x_{5}\},\{\overline
{x_{1}},x_{2},\overline{x_{3}},x_{4}\},\{x_{1},x_{2},\overline{x_{4}%
},\overline{x_{5}}\}\}$,\newline$I_{2}=(Y,C_{1},C_{2})$, where $Y=\{x_{1}%
,x_{2},x_{3},x_{4},x_{5},y_{1},y_{2},y_{3},y_{4},y_{5}\}$, \newline%
$C_{1}=\{\{x_{1},y_{2},x_{3}\},\{x_{1},x_{3},x_{4},x_{5}\},\{y_{1},x_{2}%
,y_{3},x_{4}\},\{x_{1},x_{2},y_{4},y_{5}\},\{x_{1},y_{1}\},\newline%
\{x_{2},y_{2}\},\{x_{3},y_{3}\},\{x_{4},y_{4}\},\{x_{5},y_{5}\}\}$.
\newline$C_{2}=\{\{\overline{x_{1}},\overline{y_{1}}\},\{\overline{x_{2}%
},\overline{y_{2}}\},\{\overline{x_{3}},\overline{y_{3}}\},\{\overline{x_{4}%
},\overline{y_{4}}\},\{\overline{x_{5}},\overline{y_{5}}\}\}$
\end{example}

The following theorem is the main result of this section.

\begin{theorem}
\label{bipartitenpc} The $\mathbf{RE}$ problem is \textbf{NP-}complete even if
its input is restricted to bipartite graphs.
\end{theorem}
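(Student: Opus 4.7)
The plan is to reduce the \textbf{USAT} problem (shown to be \textbf{NP}-complete in Lemma~\ref{usat}) to the $\mathbf{RE}$ problem on bipartite graphs. Membership in \textbf{NP} is immediate: given an instance $(G,xy)$, a candidate witness $S$ can be guessed and verified in polynomial time by checking independence and that $S\cup\{x\}$ and $S\cup\{y\}$ are both maximal.

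Given an instance $I_{2}=(Y,C_{1}\cup C_{2})$ of \textbf{USAT} (with $Y=\{y_{1},\ldots,y_{n}\}$, clauses in $C_{1}$ containing only positive literals and clauses in $C_{2}$ containing only negations), I would build a bipartite graph $G$ with bipartition $A\cup B$ as follows. Place the distinguished edge $ab$ with $a\in A$, $b\in B$. For each variable $y_{i}$ create $y_{i}\in A$ and $y_{i}^{\prime}\in B$, joined by an edge. For each positive clause $c\in C_{1}$ create a vertex $c\in B$ adjacent to $a$ and to every $y_{i}$ appearing as a literal in $c$. For each negative clause $c^{\prime}\in C_{2}$ create a vertex $c^{\prime}\in A$ adjacent to $b$ and to every $y_{i}^{\prime}$ such that $\overline{y_{i}}$ appears in $c^{\prime}$. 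A direct check confirms that every edge crosses the bipartition, so $G$ is bipartite, and the construction is clearly polynomial.

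To prove equivalence, I would first observe the key structural fact in $G$: $N(a)\cap N(b)=\emptyset$, because the clause vertices of $C_{1}$ lie in $B$ and those of $C_{2}$ lie in $A$, while the only edge between $a$ and $b$ themselves is $ab$. Consequently, any witness $S$ for $ab$ being relating must lie inside $V(G)\setminus(N[a]\cup N[b])=\{y_{i},y_{i}^{\prime}:1\le i\le n\}$, and every other vertex must be dominated by $S$. Since $y_{i}$ and $y_{i}^{\prime}$ are adjacent, at most one lies in $S$; since the only non-clause neighbor of $y_{i}$ is $y_{i}^{\prime}$ (and clause vertices are excluded from $S$), the domination condition forces exactly one of $\{y_{i},y_{i}^{\prime}\}$ in $S$ for every $i$. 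Defining $\Phi(y_{i})=1$ iff $y_{i}\in S$ then yields a truth assignment, and the domination of each clause vertex translates into the satisfaction of the corresponding clause. Conversely, given a satisfying $\Phi$, the set $S=\{y_{i}:\Phi(y_{i})=1\}\cup\{y_{i}^{\prime}:\Phi(y_{i})=0\}$ is easily checked to be a valid witness.

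The main obstacle is the backward direction, namely arguing that any witness is forced into the ``canonical'' shape that encodes a truth assignment. This hinges precisely on ensuring $N(a)\cap N(b)=\emptyset$ and on the fact that the only vertices dominating $y_{i}$ inside the allowed set are $y_{i}^{\prime}$ (and vice versa); getting the incidences exactly right so that no degenerate witness exists — while simultaneously preserving bipartiteness — is the delicate part of the construction.
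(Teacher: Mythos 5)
Your proposal is correct and is essentially the paper's own proof: the same reduction from \textbf{USAT}, with your $a,b,y_{i},y_{i}^{\prime}$ and clause vertices corresponding exactly to the paper's $x,y,u_{i},u_{i}^{\prime},v_{j},v_{j}^{\prime}$, the same bipartition, and the same argument that a witness must consist of exactly one vertex from each pair $\{y_{i},y_{i}^{\prime}\}$ and hence encodes a satisfying assignment.
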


\begin{proof}
The problem is obviously in \textbf{NP}. We prove \textbf{NP-}completeness by
showing a reduction from the \textbf{BWSAT} problem. Let
\[
I_{1}=(X=\{x_{1},...,x_{n}\},C_{1},C_{2})
\]
be an instance of the \textbf{BWSAT} problem, where $C_{1}=\{c_{1}%
,...,c_{m}\}$ is a set of clauses which contain only variables, and
$C_{2}=\{c_{1}^{\prime},...,c_{m^{\prime}}^{\prime}\}$ is a set of clauses
which contain only negations of variables. Define a graph $B$ as follows:
\begin{gather*}
V\left(  B\right)  =\{x,y\}\cup\{v_{j}:1\leq j\leq m\}\cup\{v_{j}^{\prime
}:1\leq j\leq m^{\prime}\}\cup\\
\{u_{i}:1\leq i\leq n\}\cup\{u_{i}^{\prime}:1\leq i\leq n\},
\end{gather*}%
\begin{gather*}
E\left(  B\right)  =\{xy\}\cup\{xv_{j}:1\leq j\leq m\}\cup\{yv_{j}^{\prime
}:1\leq j\leq m^{\prime}\}\cup\\
\{v_{j}u_{i}:x_{i}\ \text{appears\ in}\ c_{j}\}\cup\{v_{j}^{\prime}%
u_{i}^{\prime}:\overline{x_{i}}\ \text{appears\ in}\ c_{j}^{\prime}%
\}\cup\{u_{i}u_{i}^{\prime}:1\leq i\leq n\}\newline.
\end{gather*}
\newline Clearly, $B$ is bipartite, and the vertex sets of its bipartition
are
\[
\{u_{i}:1\leq i\leq n\}\cup\{x\}\cup\{v_{j}^{\prime}:1\leq j\leq m^{\prime}\}
\]
and
\[
\{v_{j}:1\leq j\leq m\}\cup\{y\}\cup\{u_{i}^{\prime}:1\leq i\leq n\}.
\]

Consider the instance $I_{2}=(B,xy)$ of the $\mathbf{RE}$ problem. It is
necessary to prove that $I_{1}$ and $I_{2}$ are equivalent.

Assume that $I_{1}$ is a positive instance of the \textbf{BWSAT} problem. Let
\[
\Phi:\{x_{1},\overline{x_{1}},...,x_{n},\overline{x_{n}}\}\longrightarrow
\{0,1\}
\]
be a truth assignment which satisfies all clauses of $C_{1}\cup C_{2}$. Let
\[
S=\{u_{i}:\Phi(x_{i})=1\}\cup\{u_{i}^{\prime}:\Phi(x_{i})=0\}.
\]
Obviously, $S$ is independent. Since $\Phi$ satisfies all clauses of
$C_{1}\cup C_{2}$, every vertex of
\[
\{v_{j}:1\leq j\leq m\}\cup\{v_{j}^{\prime}:1\leq j\leq m^{\prime}\}
\]
is adjacent to a vertex of $S$. Hence, $S\cup\{x\}$ and $S\cup\{y\}$ are
maximal independent sets. Therefore, $S$ is a witness that $xy$ is a relating
edge, and $I_{2}$ is a positive instance of the $\mathbf{RE}$ problem.

On the other hand, assume that $I_{2}$ is a positive instance of the
$\mathbf{RE}$ problem. Let $S$ be a witness of $xy$. Since $S$ is a maximal
independent set of
\[
\{u_{i}:1\leq i\leq n\}\cup\{u_{i}^{\prime}:1\leq i\leq n\},
\]
exactly one of $u_{i}$ and $u_{i}^{\prime}$ belongs to $S$, for every $1\leq
i\leq n$. Let
\[
\Phi:\{x_{1},\overline{x_{1}},...,x_{n},\overline{x_{n}}\}\longrightarrow
\{0,1\}
\]
be a truth assignment defined by: $\Phi(x_{i})=1\iff u_{i}\in S$. The fact
that $S$ dominates
\[
\{v_{j}:1\leq j\leq m\}\cup\{v_{j}^{\prime}:1\leq j\leq m^{\prime}\}
\]
implies that all clauses of $C_{1}\cup C_{2}$ are satisfied by $\Phi$.
Therefore, $I_{1}$ is a positive instance of the \textbf{BWSAT} problem.
\end{proof}

\begin{figure}[h]
\setlength{\unitlength}{1.0cm} \begin{picture}(20,10)\thicklines
\put(6,4){\circle*{0.1}}
\put(6,5.5){\circle*{0.1}}
\put(5.7,4){\makebox(0,0){$y$}}
\put(5.7,5.5){\makebox(0,0){$x$}}
\put(1.5,8.5){\circle*{0.25}}
\put(3,8.5){\circle*{0.1}}
\put(4.5,8.5){\circle*{0.25}}
\put(7.5,8.5){\circle*{0.25}}
\put(9,8.5){\circle*{0.25}}
\put(10.5,8.5){\circle*{0.1}}
\put(1.2,8.5){\makebox(0,0){$u_{1}$}}
\put(2.7,8.5){\makebox(0,0){$u_{2}$}}
\put(4.2,8.5){\makebox(0,0){$u_{3}$}}
\put(7.8,8.5){\makebox(0,0){$u_{4}$}}
\put(9.3,8.5){\makebox(0,0){$u_{5}$}}
\put(10.8,8.5){\makebox(0,0){$u_{6}$}}
\multiput(3,7)(1.5,0){5}{\circle*{0.1}}
\put(2.7,6.9){\makebox(0,0){$v_{1}$}}
\put(4.2,6.9){\makebox(0,0){$v_{2}$}}
\put(5.7,6.9){\makebox(0,0){$v_{3}$}}
\put(7.8,6.9){\makebox(0,0){$v_{4}$}}
\put(9.3,6.9){\makebox(0,0){$v_{5}$}}
\multiput(4.5,2.5)(1.5,0){3}{\circle*{0.1}}
\put(4.2,2.6){\makebox(0,0){$v'_{1}$}}
\put(5.7,2.6){\makebox(0,0){$v'_{2}$}}
\put(7.8,2.6){\makebox(0,0){$v'_{3}$}}
\put(1.5,1){\circle*{0.1}}
\put(3,1){\circle*{0.25}}
\put(4.5,1){\circle*{0.1}}
\put(7.5,1){\circle*{0.1}}
\put(9,1){\circle*{0.1}}
\put(10.5,1){\circle*{0.25}}
\put(1.2,1){\makebox(0,0){$u'_{1}$}}
\put(2.7,1){\makebox(0,0){$u'_{2}$}}
\put(4.2,1){\makebox(0,0){$u'_{3}$}}
\put(7.8,1){\makebox(0,0){$u'_{4}$}}
\put(9.3,1){\makebox(0,0){$u'_{5}$}}
\put(10.8,1){\makebox(0,0){$u'_{6}$}}
\multiput(3,7)(1.5,0){5}{\circle*{0.1}}
\multiput(4.5,2.5)(1.5,0){3}{\circle*{0.1}}
\put(6,2.5){\line(0,1){4.5}}
\put(6,5.5){\line(-2,1){3}}
\put(6,5.5){\line(-1,1){1.5}}
\put(6,5.5){\line(1,1){1.5}}
\put(6,5.5){\line(2,1){3}}
\put(6,4){\line(-1,-1){1.5}}
\put(6,4){\line(1,-1){1.5}}
\put(3,7){\line(-1,1){1.5}}
\put(3,7){\line(0,1){1.5}}
\put(3,7){\line(1,1){1.5}}
%\put(3,7){\line(2,1){3}}
\put(4.5,7){\line(-1,1){1.5}}
\put(4.5,7){\line(2,1){3}}
\put(6,7){\line(-3,1){4.5}}
%\put(6,7){\line(3,1){4.5}}
%\put(6,7){\line(2,1){3}}
\put(6,7){\line(1,1){1.5}}
\put(7.5,7){\line(-4,1){6}}
\put(7.5,7){\line(1,1){1.5}}
\put(7.5,7){\line(2,1){3}}
\put(9,7){\line(-3,1){4.5}}
\put(9,7){\line(0,1){1.5}}
\put(9,7){\line(1,1){1.5}}
\put(4.5,2.5){\line(-1,-1){1.5}}
\put(4.5,2.5){\line(0,-1){1.5}}
\put(4.5,2.5){\line(-2,-1){3}}
\put(6,2.5){\line(1,-1){1.5}}
\put(6,2.5){\line(-1,-1){1.5}}
\put(6,2.5){\line(2,-1){3}}
\put(6,2.5){\line(-2,-1){3}}
\put(7.5,2.5){\line(1,-1){1.5}}
%\put(7.5,2.5){\line(-1,-1){1.5}}
\put(7.5,2.5){\line(0,-1){1.5}}
\put(7.5,2.5){\line(-3,-1){4.5}}
\put(7.5,2.5){\line(2,-1){3}}
\put(1.5,1){\line(0,1){7.5}}
\put(2.3,4.8){\oval(3,9.4)[l]}
\put(1.9,8.2){\oval(2.2,2.6)[t]}
\put(1.9,1){\oval(2.2,1.8)[b]}
\put(3.2,4.8){\oval(5.2,9.8)[l]}
\put(2.55,8.5){\oval(3.9,2.4)[t]}
\put(2.55,1.8){\oval(3.9,3.8)[b]}
\put(10.5,1){\line(0,1){7.5}}
\put(9.7,4.8){\oval(3,9.4)[r]}
\put(10.1,8.2){\oval(2.2,2.6)[t]}
\put(10.1,1){\oval(2.2,1.8)[b]}
\put(8.8,4.8){\oval(5.2,9.8)[r]}
\put(9.45,8.5){\oval(3.9,2.4)[t]}
\put(9.45,1.8){\oval(3.9,3.8)[b]}
\end{picture}
\caption{An example of the reduction from the \textbf{USAT} problem to the
$\mathbf{RE}$ problem.}%
\label{Fig1}%
\end{figure}
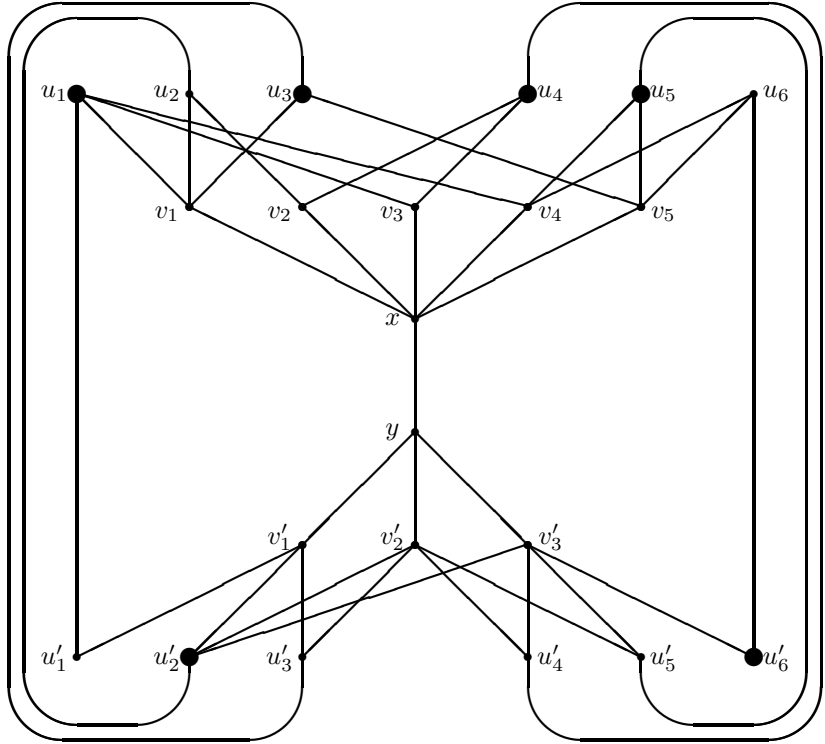

\begin{example}
\label{rereduction} Let $I_{1}=(X,C_{1},C_{2})$ be an instance of the
\textbf{BWSAT} problem, where $X=\{x_{1},x_{2},x_{3},x_{4},x_{5},x_{6}%
\}$,\newline$C_{1}=\{\{x_{1},x_{2},x_{3}\},\{x_{2},x_{4}\},\{x_{1}%
,x_{4}\},\{x_{1},x_{5},x_{6}\},\{x_{3},x_{5},x_{6}\}\}$, and \newline%
$C_{2}=\{\{\overline{x_{1}},\overline{x_{2}},\overline{x_{3}}\},\{\overline
{x_{2}},\overline{x_{3}},\overline{x_{4}},\overline{x_{5}}\},\{\overline
{x_{2}},\overline{x_{4}},\overline{x_{5}},\overline{x_{6}}\}\}$

Then $I_{2}=(G,xy)$ is an equivalent instance of the $\mathbf{RE}$ problem,
where $G$ is the graph shown in Figure \ref{Fig1}. The instance $I_{1}$ is
positive because of the satisfying assignment $\Phi$ defined by $\Phi(x_{i}) =
0$ if $i \in\{2,6\}$, and $\Phi(x_{i}) = 1$ otherwise. The corresponding
witness that $I_{2}$ is positive is the set $\{u_{1},u_{2}^{\prime}%
,u_{3},u_{4},u_{5},u_{6}^{\prime}\}$.
\end{example}

\begin{corollary}
\label{bipartitenpc2} The $\mathbf{GS}$ problem is \textbf{NP-}complete when
its input is restricted to bipartite graphs.
\end{corollary}

\subsection{Graphs with girth $6$ at least}

In this subsection we consider the following problems:

\begin{itemize}
\item \textbf{3-SAT problem}:\newline\textit{Input}: A set $X$ of binary
variables and a set $C$ of clauses over $X$ such that every clause contains
exactly $3$ literals.\newline\textit{Question}: Is there a truth assignment
for $X$ satisfying all clauses of $C$?

\item \textbf{DSAT problem}:\newline\textit{Input}: A set $X$ of binary
variables and a set $C$ of clauses over $X$ such that the following holds:

\begin{itemize}
\item Every clause contains $2$ or $3$ literals.

\item Every two distinct clauses have at most one literal in common.

\item If two clauses, $c_{1}$ and $c_{2}$, have a common literal $l_{1}$, then
there does not exist a literal $l_{2}$ such that $c_{1}$ contains $l_{2}$ and
$c_{2}$ contains $\overline{l_{2}}$.
\end{itemize}

\textit{Question}: Is there a truth assignment for $X$ satisfying all clauses
of $C$?
\end{itemize}

Let $I=(X, C)$ be an instance of the \textbf{3-SAT} problem. A \textit{bad
pair of clauses} is a set of two clauses $\{c_{1}, c_{2}\} \subseteq C$ such
that there exist literals, $l_{1}, l_{2}, l_{3}, l_{4}, l_{5}$, and:

\begin{itemize}
\item $c_{1}=\{l_{1},l_{2},l_{3}\}$ and $c_{2}=\{l_{1},l_{4},l_{5}\}$;

\item either $l_{2}=l_{4}$ or $l_{2}=\overline{l_{4}}$.
\end{itemize}

Clearly, an instance of the \textbf{3-SAT} problem with no bad pair of clauses
is also an instance of the \textbf{DSAT} problem. The \textbf{3-SAT} problem
is known to be in \textbf{NP-}complete. We prove that the same holds for the
\textbf{DSAT} problem.

\begin{lemma}
\label{dsat} The \textbf{DSAT} problem is \textbf{NP-}complete.
\end{lemma}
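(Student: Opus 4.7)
The plan is to show \textbf{DSAT} $\in$ \textbf{NP} (immediate: a satisfying assignment is a polynomial-size certificate verifiable in linear time) and then reduce \textbf{3-SAT} to \textbf{DSAT} to obtain hardness. The source of difficulty is the ``bad pairs'' that an arbitrary \textbf{3-SAT} instance may contain, either two clauses sharing more than one literal, or sharing one literal while disagreeing on a second variable. My strategy is to eliminate all such sharing by renaming so that each literal occurrence in the input uses a \emph{fresh variable copy}, and then to tie the copies of each original variable back together using equality clauses of size two.

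Concretely, starting from an instance $(X,C)$ of \textbf{3-SAT} with $X=\{x_{1},\ldots,x_{n}\}$, for every variable $x_{i}$ with $k_{i}$ occurrences in $C$ I introduce fresh copies $x_{i}^{(1)},\ldots,x_{i}^{(k_{i})}$ and replace the $j$-th occurrence of $x_{i}$ or $\overline{x_{i}}$ in $C$ with $x_{i}^{(j)}$ or $\overline{x_{i}^{(j)}}$, respectively. Call the resulting set of 3-literal clauses $C'$. For each $i$ and each $1\leq j<k_{i}$ I append the two 2-literal clauses $A_{i,j}=(\overline{x_{i}^{(j)}},x_{i}^{(j+1)})$ and $B_{i,j}=(x_{i}^{(j)},\overline{x_{i}^{(j+1)}})$, which together enforce $x_{i}^{(j)}=x_{i}^{(j+1)}$. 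Let $E$ be the collection of all such equality clauses. The output is the instance $(Y,C'\cup E)$, where $Y$ is the set of all copy variables; its size is clearly polynomial in $|X|+|C|$.

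What remains is a case analysis verifying the \textbf{DSAT} structural conditions on pairs of clauses, and checking equivalence. Two clauses of $C'$ share no literal, because each literal in $C'$ uses a distinct copy. Equality clauses from different chains involve disjoint variables, hence share nothing. Within a single chain, direct inspection of $A_{i,j}$ and $B_{i,j'}$ (and of $A$-$A$, $B$-$B$ pairs) shows that any shared literal is of the form $x_{i}^{(\ell)}$ or $\overline{x_{i}^{(\ell)}}$ for a single $\ell$, and the remaining literals involve different copies of $x_{i}$, so the forbidden ``$l_{2}\in c_{1}$ with $\overline{l_{2}}\in c_{2}$'' pattern cannot arise. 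Finally, a main clause in $C'$ shares at most a single copy $x_{i}^{(j)}$ (or its negation) with any equality clause touching the same chain, and its other two literals are copies of different variables entirely, so the no-opposite condition holds vacuously. For equivalence: a satisfying assignment of $(X,C)$ extends to $(Y,C'\cup E)$ by setting $x_{i}^{(j)}:=\Phi(x_{i})$ for all $j$; conversely, any satisfying assignment of $(Y,C'\cup E)$ collapses under the equality chains so that all $x_{i}^{(j)}$ agree, yielding a satisfying assignment of $(X,C)$ via $\Phi(x_{i}):=\Phi'(x_{i}^{(1)})$. The main obstacle is just the bookkeeping in the case analysis; the reduction itself is manifestly polynomial.
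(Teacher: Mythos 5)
Your proof is correct, and it reaches the goal by a genuinely different route than the paper, although both constructions hinge on the same gadget: the pair of $2$-clauses $(\overline{a}+b)$ and $(a+\overline{b})$ forcing two variables to take equal values. The paper works \emph{locally and iteratively}: it defines a ``bad pair'' of clauses, repairs one bad pair at a time by replacing a single occurrence of one literal with a fresh variable plus the two equality clauses, and argues termination via the decreasing count of bad pairs. You instead work \emph{globally in one pass}: every literal occurrence gets its own fresh copy, and the copies of each original variable are chained together by equality clauses. Your version buys a cleaner construction with an immediate polynomial size bound (at most $3|C|$ copies and $6|C|$ equality clauses) and no termination argument, at the cost of a somewhat longer case analysis verifying the \textbf{DSAT} conditions among main clauses, adjacent chain clauses, and main-versus-chain pairs; that analysis goes through exactly as you sketch it, since any two clauses of your instance share at most one variable, and the second literal of an equality clause is always a copy absent from the other clause. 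The paper's version modifies only the offending clauses but must separately argue that the newly introduced clauses create no new violations. One small point worth making explicit in your write-up: the three literals of a $3$-\textbf{SAT} clause lie on three distinct variables (a clause is a set not containing a variable and its negation), which is what guarantees that a main clause of $C'$ meets any single chain in at most one copy.
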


\begin{proof}
Obviously, the \textbf{DSAT} problem is in \textbf{NP}. We prove its
\textbf{NP-}completeness by showing a reduction from the \textbf{3-SAT}
problem. Let
\[
I_{1}=(X=\{x_{1},...,x_{n}\},C=\{c_{1},...,c_{m}\})
\]
be an instance of the \textbf{3-SAT} problem.

Assume that there exists a bad pair of clauses, $\{c_{j_{1}}, c_{j_{2}}\}
\subseteq C$, i.e. there exist literals, $l_{1}, l_{2}, l_{3}, l_{4}, l_{5}$,
such that:

\begin{itemize}
\item $c_{j_{1}}=\{l_{1},l_{2},l_{3}\}$ and $c_{j_{2}}=\{l_{1},l_{4},l_{5}\}$;

\item either $l_{2}=l_{4}$ or $l_{2}=\overline{l_{4}}$.
\end{itemize}

Define a new binary variable $x_{n+1}$, and new clauses $c_{j_{2}}^{1}%
=\{l_{1},x_{n+1},l_{5}\}$, $c_{j_{2}}^{2}=\{\overline{l_{4}},x_{n+1}\}$, and
$c_{j_{2}}^{3}=\{l_{4},\overline{x_{n+1}}\}$. Then
\[
I_{1}^{\prime}=(X\cup\{x_{n+1}\},(C\setminus\{c_{j_{2}}\})\cup\{c_{j_{2}}%
^{1},c_{j_{2}}^{2},c_{j_{2}}^{3}\})
\]
is an instance of the \textbf{SAT} problem.

We prove that $I_{1}$, $I_{1}^{\prime}$ are equivalent. Assume that $I_{1}$ is
a positive instance of the \textbf{3-SAT} problem. There exists a truth
assignment
\[
\Phi_{1}:\{x_{1},\overline{x_{1}},...,x_{n},\overline{x_{n}}\}\longrightarrow
\{0,1\}
\]
which satisfies all clauses of $C$. Extend $\Phi_{1}$ to a truth assignment
\[
\Phi_{2}:\{x_{1},\overline{x_{1}},...,x_{n+1},\overline{x_{n+1}}%
\}\longrightarrow\{0,1\}
\]
by defining $\Phi_{2}(x_{n+1})=\Phi_{1}(l_{4})$. Clearly, $\Phi_{2}$ satisfies
all clauses of $I_{1}^{\prime}$. On the other hand, assume that there exists a
truth assignment
\[
\Phi_{2}:\{x_{1},\overline{x_{1}},...,x_{n+1},\overline{x_{n+1}}%
\}\longrightarrow\{0,1\}
\]
which satisfies all clauses of $I_{1}^{\prime}$. Clauses $c_{j_{2}}^{2}$, and
$c_{j_{2}}^{3}$ imply that $\Phi_{2}(x_{n+1})=\Phi_{2}(l_{4})$. Therefore,
$I_{1}$ is a positive instance of the \textbf{3-SAT} problem.

The new clauses we added contain a new binary variable. Hence, they do not
belong to bad pairs of clauses. Moreover, the clause $c_{j_{2}}$ which belongs
to a bad pair in $I_{1}$ was omitted in $I_{1}^{\prime}$. Hence, the number of
bad pairs of clauses in $I_{1}^{\prime}$ is smaller than the one in $I_{1}$.

Repeat that process until an instance without bad pairs of clauses is
obtained, and denote that instance $I_{2}$. Clearly, every clause of $I_{2}$
has $2$ or $3$ literals. Hence, $I_{2}$ is an instance of the \textbf{DSAT}
problem, and $I_{1}$ and $I_{2}$ are equivalent.
\end{proof}

\begin{example}
\label{ex23sat} The following contains an instance of the \textbf{3-SAT}
problem and an equivalent instance of the \textbf{DSAT} problem.

$I_{1} = (X_{1}, C_{1})$ where $X_{1} = \{x_{1}, x_{2}, x_{3}, x_{4}, x_{5}\}$
and $C_{1} = \{\{x_{1}, \overline{x_{2}}, x_{3}\}, \newline\{x_{1}, x_{3},
x_{4}\}, \{x_{1}, x_{3}, x_{5}\}, \{\overline{x_{3}}, \overline{x_{4}},
x_{5}\}, \{x_{2}, \overline{x_{3}},\overline{x_{4}}\}, \{\overline{x_{1}},
\overline{x_{4}}, \overline{x_{5}}\}\}$.

$I_{2} = (X_{2}, C_{2})$ where $X_{2} = \{x_{1}, x_{2}, x_{3}, x_{4}, x_{5},
z_{3}, y_{3}, y_{4}, y_{5}\}$ and \newline$C_{2} = \{ \{x_{1},\overline{x_{2}%
},x_{3}\}, \{x_{1},y_{3},x_{4}\}, \{x_{1},z_{3},x_{5}\}, \{\overline{x_{3}%
},\overline{x_{4}},x_{5}\}, \{x_{2},\overline{x_{3}},y_{4}\}, \newline%
\{\overline{x_{1}},\overline{x_{4}},y_{5}\}, \{\overline{x_{3}},y_{3}\},
\{x_{3},\overline{y_{3}}\}, \{\overline{x_{3}},z_{3}\}, \{x_{3},\overline
{z_{3}}\}, \{x_{4},y_{4}\}, \{\overline{x_{4}},\overline{y_{4}}\},
\{x_{5},y_{5}\}, \newline\{\overline{x_{5}},\overline{y_{5}}\}\}$.
\end{example}

\begin{theorem}
\label{generatingc345npc} The following problem is \textbf{NP-}complete:
\newline Input: A graph $G\in\mathcal{G}(\widehat{C_{3}},\widehat{C_{4}%
},\widehat{C_{5}})$ and an induced complete bipartite subgraph $B$ of $G$.
\newline Question: Is $B$ generating?
\end{theorem}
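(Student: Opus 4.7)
The plan is to reduce from the \textbf{DSAT} problem, which is NP-complete by Lemma \ref{dsat}. Membership in NP is immediate: a candidate witness $S$ can be verified in polynomial time against the maximality of $S\cup B_X$ and $S\cup B_Y$. For the hardness direction, given a DSAT instance $I_1=(X=\{x_1,\ldots,x_n\},C=\{c_1,\ldots,c_m\})$, I would construct $(G,B)$ so that $G\in\mathcal{G}(\widehat{C_3},\widehat{C_4},\widehat{C_5})$ and $B$ is generating if and only if $I_1$ is satisfiable.

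The construction would refine the bipartite reduction of Theorem \ref{bipartitenpc}. To each variable $x_i$ I attach two vertices $u_i,u_i'$ (encoding $x_i=1$ and $x_i=0$) joined by an edge; to each clause $c_j$ I attach a clause vertex $v_j$ joined to $u_i$ when $x_i\in c_j$ and to $u_i'$ when $\overline{x_i}\in c_j$. The complete bipartite subgraph $B$ to be tested is then attached to the clause vertices via short subdivision paths, rather than directly as in Theorem \ref{bipartitenpc}, so that every cycle that meets $B$ is stretched past length $5$. As in the proof of Theorem \ref{bipartitenpc}, the maximality of $S\cup B_X$ and $S\cup B_Y$ will force any witness $S$ to pick exactly one of $u_i,u_i'$ per variable (because of the edge $u_iu_i'$) and to dominate every $v_j$; the assignment $\Phi(x_i)=1\iff u_i\in S$ then satisfies every clause of $C$, and conversely a satisfying assignment produces a witness by the same recipe augmented with the canonical choice of subdivision vertices.

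The main obstacle, and the reason DSAT is the right source problem, is verifying that $G$ has no $C_3$, $C_4$ or $C_5$. Short cycles of the naive construction arise in two shapes. First, if two clauses $c_{j_1},c_{j_2}$ share two common literals, they create a $C_4$ of the form $v_{j_1}u_iv_{j_2}u_{i'}v_{j_1}$; this is killed precisely by the DSAT condition that two clauses share at most one literal. Second, if $c_{j_1}$ and $c_{j_2}$ share a literal $l_1$ while containing a literal $l_2$ and its negation $\overline{l_2}$ respectively, they create a $C_5$ of the form $v_{j_1}u_iv_{j_2}u_{i'}'u_{i'}v_{j_1}$ through the edge $u_{i'}u_{i'}'$; this is killed precisely by the second DSAT condition. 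No triangles appear because the clause--variable part is bipartite, and no other short cycles can pass through $B$ once the attachment paths are long enough.

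The step I expect to be the main obstacle is fixing the length of the subdivision around $B$ so that simultaneously (i) no $C_3$, $C_4$ or $C_5$ uses vertices of $B$, and (ii) the witness-to-assignment correspondence is preserved, i.e.\ the subdivision vertices are forced into or out of every witness in a canonical way and contribute no spurious freedom. Once this is pinned down, the remaining case analysis --- enumerating cycles inside one clause, between two clause vertices, and through the subdivision paths attaching $B$ --- is routine and is exactly what the DSAT hypotheses were tailored to handle.
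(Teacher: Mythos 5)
Your skeleton matches the paper's: reduce from \textbf{DSAT}, encode each variable by the edge $u_iu_i'$ and each clause by a vertex $v_j$ joined to its literal vertices, and invoke the two \textbf{DSAT} conditions to kill exactly the $C_4$ (two shared literals) and the $C_5$ (shared literal plus a complementary pair) that the naive construction creates. But the one step you defer --- how $B$ is defined and attached to the clause vertices --- is precisely the non-routine part, and the design you sketch would fail as stated. If $B$ is joined to the $v_j$ by subdivision paths whose internal vertices may enter the witness (your phrase ``augmented with the canonical choice of subdivision vertices''), then a witness can dominate every $v_j$ using those internal vertices alone, independently of any truth assignment, and the backward direction of the reduction collapses: $(G,B)$ could be positive while $I_1$ is unsatisfiable. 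Lengthening the paths does not help; it only moves the spurious dominating vertices farther from $B$.

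The paper's resolution is to make the attachment vertices part of $B$ itself: $B$ is the star $K_{1,m}$ with center $y$ and leaves $a_1,\dots,a_m$, where each $a_j$ is adjacent to $v_j$ and to nothing else outside $B$, and $y$ has no neighbours outside $B$. Then any witness $S$ must avoid $N[B_X\cup B_Y]=\{y\}\cup\{a_j\}\cup\{v_j\}$, hence $S\subseteq\{u_i,u_i'\}$ with no spurious freedom at all, while maximality of $S\cup\{y\}$ forces $S$ to dominate every $v_j$, since $v_j\notin N[y]$. This simultaneously settles your points (i) and (ii): any cycle through $y$ or an $a_j$ must travel $y$--$a_{j_1}$--$v_{j_1}$--$\cdots$--$v_{j_2}$--$a_{j_2}$--$y$ and therefore has length at least $6$, and the witness-to-assignment correspondence is exact. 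Note also that this forces $B$ to be a star rather than a single edge, which is consistent with the fact that the \textbf{RE} problem is not known to be hard for this class; without committing to some such choice of $B$, the proof is incomplete.
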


\begin{proof}
The problem is obviously in \textbf{NP}. We prove its \textbf{NP-}completeness
by showing a reduction from the \textbf{DSAT} problem. Let
\[
I=(X=\{x_{1},...,x_{n}\},C=\{c_{1},...,c_{m}\})
\]
be an instance of the \textbf{DSAT} problem. Define a graph $G$ as follows.
\begin{align*}
V(G)  &  =\{y\}\cup\{a_{j}:1\leq j\leq m\}\cup\{v_{j}:1\leq j\leq m\}\cup\\
\{u_{i}  &  :1\leq i\leq n\}\cup\{u_{i}^{\prime}:1\leq i\leq n\}.
\end{align*}%
\begin{gather*}
E(G)=\{ya_{j}:1\leq j\leq m\}\cup\{a_{j}v_{j}:1\leq j\leq m\}\cup\{v_{j}%
u_{i}:x_{i}\ \text{appears\ in}\ c_{j}\}\cup\\
\{v_{j}u_{i}^{\prime}:\overline{x_{i}}\ \text{appears\ in}\ c_{j}\}\cup
\{u_{i}u_{i}^{\prime}:1\leq i\leq n\}.
\end{gather*}
Since a clause can not contain both a variable and its negation, \ $G$ \ does
\ not \ contain \ $C_{3}$. \ The \ fact \ that there are no pairs of bad
clauses implies that \ $G$ \ \ does \ \ not \ contain \ $C_{4}$ \ and
\ $C_{5}$. \ Hence, \ \ $G\in\mathcal{G}(\widehat{C_{3}},\widehat{C_{4}%
},\widehat{C_{5}})$. \ Let \ \ $B=G[\{y\}\cup\{a_{j}:1\leq j\leq m\}]$.
Obviously, $B$ is complete bipartite. Then $J=(G,B)$ is an instance of the
$\mathbf{GS}$ problem. It remains to prove that $I$ and $J$ are equivalent.

Assume that $I$ is positive, and let
\[
\Phi:\{x_{1},\overline{x_{1}},...,x_{n},\overline{x_{n}}\}\longrightarrow
\{0,1\}
\]
be a truth assignment which satisfies all clauses of $C$. Define
\[
S=\{u_{i}:\Phi(x_{i})=1\}\cup\{u_{i}^{\prime}:\Phi(x_{i})=0\}.
\]
Obviously, $S$ is independent. Since $\Phi$ satisfies all clauses of $C$, the
set $S$ dominates $\{v_{j}:1\leq j\leq m\}$. Hence, $S$ is a witness that $B$
is generating, i.e., $J$ is positive.

Assume that $J$ is positive. Let $S$ be a witness that $B$ is generating, and
let $S^{\ast}$ be a maximal independent set of $\{u_{i}:1\leq i\leq
n\}\cup\{u_{i}^{\prime}:1\leq i\leq n\}$ which contains $S$. For every $1\leq
i\leq n$, it holds that $|S^{\ast}\cap\{u_{i},u_{i}^{\prime}\}|=1$. Define
\[
\Phi:\{x_{1},\overline{x_{1}},...,x_{n},\overline{x_{n}}\}\longrightarrow
\{0,1\}
\]
by $\Phi(x_{i})=1\iff u_{i}\in S^{\ast}$ for every $1\leq i\leq n$. Since
$S^{\ast}$ dominates $\{v_{j}:1\leq j\leq m\}$, the function $\Phi$ satisfies
all clauses of $C$, and $I$ is a positive instance.
\end{proof}

\begin{figure}[h]
\setlength{\unitlength}{1.0cm} \begin{picture}(15,6)\thicklines
\put(6,1){\circle*{0.1}}
\put(6,0.7){\makebox(0,0){$y$}}
\multiput(3,2.5)(1.5,0){5}{\circle*{0.1}}
\put(6,1){\line(0,1){1.5}}
\put(6,1){\line(-2,1){3}}
\put(6,1){\line(-1,1){1.5}}
\put(6,1){\line(1,1){1.5}}
\put(6,1){\line(2,1){3}}
\multiput(3,4)(1.5,0){5}{\circle*{0.1}}
\multiput(3,2.5)(1.5,0){5}{\line(0,1){1.5}}
\put(2.7,2.5){\makebox(0,0){$a_{1}$}}
\put(4.2,2.5){\makebox(0,0){$a_{2}$}}
\put(5.7,2.5){\makebox(0,0){$a_{3}$}}
\put(7.2,2.5){\makebox(0,0){$a_{4}$}}
\put(8.7,2.5){\makebox(0,0){$a_{5}$}}
\multiput(2,5.5)(1.5,0){6}{\circle*{0.1}}
\put(2.7,4){\makebox(0,0){$v_{1}$}}
\put(4.2,4){\makebox(0,0){$v_{2}$}}
\put(5.7,4){\makebox(0,0){$v_{3}$}}
\put(7.2,4){\makebox(0,0){$v_{4}$}}
\put(8.7,4){\makebox(0,0){$v_{5}$}}
\multiput(2.5,5.5)(1.5,0){6}{\circle*{0.1}}
\multiput(2,5.5)(1.5,0){6}{\line(1,0){0.5}}
\put(1.9,5.8){\makebox(0,0){$u_{1}$}}
\put(3.4,5.8){\makebox(0,0){$u_{2}$}}
\put(4.9,5.8){\makebox(0,0){$u_{3}$}}
\put(6.4,5.8){\makebox(0,0){$u_{4}$}}
\put(7.9,5.8){\makebox(0,0){$u_{5}$}}
\put(9.4,5.8){\makebox(0,0){$u_{6}$}}
\put(2.6,5.8){\makebox(0,0){$u'_{1}$}}
\put(4.1,5.8){\makebox(0,0){$u'_{2}$}}
\put(5.6,5.8){\makebox(0,0){$u'_{3}$}}
\put(7.1,5.8){\makebox(0,0){$u'_{4}$}}
\put(8.6,5.8){\makebox(0,0){$u'_{5}$}}
\put(10.1,5.8){\makebox(0,0){$u'_{6}$}}
\put(3,4){\line(-2,3){1}}
\put(3,4){\line(2,3){1}}
\put(3,4){\line(4,3){2}}
\put(4.5,4){\line(-4,3){2}}
\put(4.5,4){\line(-2,3){1}}
\put(4.5,4){\line(4,3){2}}
%\put(6,4){\line(-2,1){3}}
\put(6,4){\line(2,3){1}}
%\put(6,4){\line(2,1){3}}
%\put(7.5,4){\line(-2,1){3}}
\put(7.5,4){\line(2,3){1}}
\put(7.5,4){\line(5,3){2.5}}
%\put(9,4){\line(-5,2){4}}
\put(9,4){\line(-5,3){2.5}}
\put(9,4){\line(-2,3){1}}
%\begin{tikzpicture}
%\draw (9,4) to (10,7);
%\end{tikzpicture}
\put(2,5.5){\circle*{0.25}}
\put(3.5,5.5){\circle*{0.25}}
\put(5.5,5.5){\circle*{0.25}}
\put(6.5,5.5){\circle*{0.25}}
\put(8.5,5.5){\circle*{0.25}}
\put(10,5.5){\circle*{0.25}}
\multiput(2.5,0.5)(0,-0.03){3}{\line(1,0){7}}
\multiput(2.5,3)(0,0.03){3}{\line(1,0){7}}
\multiput(2.5,0.43)(-0.03,0){3}{\line(0,1){2.64}}
\multiput(9.5,0.43)(0.03,0){3}{\line(0,1){2.64}}
\put(10,1.7){\makebox(0,0){$\bf{B}$}}
\begin{tikzpicture}[line width=0.8pt]
\draw (1,1) -- (1,1);
\draw (10.5,6.5) -- (7,5);
\draw (6.5,6.5) -- (10,5);
\draw (4.5,6.5) -- (8.5,5);
\draw (3,6.5) -- (7,5);
\end{tikzpicture}
\end{picture}
\caption{An example of the reduction from the \textbf{DSAT} problem to the
$\mathbf{GS}$ problem.}%
\label{gsc345}%
\end{figure}
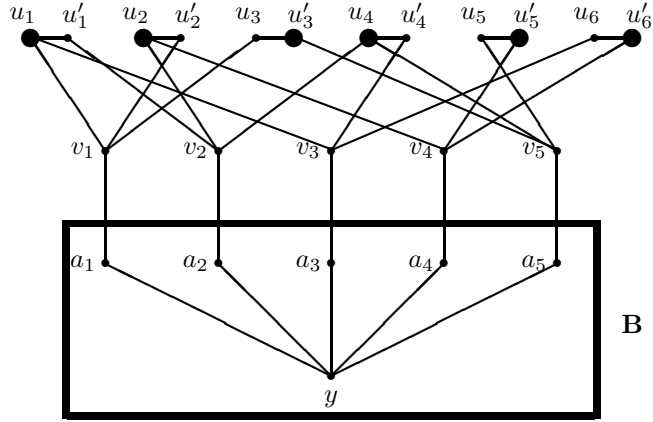

\begin{example}
\label{gsreduction} Let $I_{1} = (X,C)$ be an instance of the \textbf{DSAT}
problem, where $X =\{x_{1},x_{2},x_{3},x_{4},x_{5},x_{6}\}$ and $C =
\{\{x_{1},\overline{x_{2}},x_{3}\}, \{\overline{x_{1}},x_{2},x_{4}\},
\{x_{1},\overline{x_{4}},x_{6}\}, \newline\{x_{2},\overline{x_{5}}%
,\overline{x_{6}}\}, \{\overline{x_{3}},x_{4},x_{5}\}\}$. Then $I_{2}=(G,B)$
is an equivalent instance of the $\mathbf{GS}$ problem, where $G$ and $B$ are
the graphs shown in Figure \ref{gsc345}. The instance $I_{1}$ is positive
because of the satisfying assignment $\Phi$ defined by $\Phi(x_{i}) = 1$ if $i
\in\{1,2,4\}$, and $\Phi(x_{i}) = 0$ otherwise. The corresponding witness that
$I_{2}$ is positive is the set $\{u_{1},u_{2},u_{3}^{\prime},u_{4}%
,u_{5}^{\prime},u_{6}^{\prime}\}$.
\end{example}

\subsection{$K_{1,4}$-free graphs}

\begin{theorem}
\cite{cst:structures} \label{wck14npc} The following problem is \textbf{co}%
-\textbf{NP-}complete: \newline Input: A $K_{1,4}$-free graph $G$.\newline
Question: Is $G$ well-covered?
\end{theorem}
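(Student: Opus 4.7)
The plan is to reduce from an NP-complete satisfiability problem, say 3-SAT, to the problem, constructing from any 3-CNF formula $\phi$ a $K_{1,4}$-free graph $G_\phi$ whose well-coveredness is determined by $\phi$. The complement has an obvious short certificate --- two maximal independent sets of different sizes --- so the theorem really amounts to exhibiting a hardness reduction that at the same time respects $K_{1,4}$-freeness.

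For the reduction I would first introduce, for each variable $x_i$, a truth-assignment gadget consisting of an edge $u_i u_i'$, so that any maximal independent set contains exactly one endpoint and thereby encodes the value of $x_i$. For each clause $c_j$ I would attach a clause gadget to the literal vertices of $c_j$ with the property that its vertices are dominated by a maximal independent set exactly when the chosen assignment satisfies $c_j$; otherwise some vertex of the clause gadget must be added to the set. I would then glue on a global balancing subgraph so that all maximal independent sets arising from satisfying assignments have size exactly $k$, while those arising from unsatisfying assignments include an extra clause-gadget vertex and so have size $k' \ne k$. After this setup, whether $G_\phi$ is well-covered depends exactly on whether one of these two classes of maximal independent sets is empty, which can be made to correspond to the (un)satisfiability of $\phi$.

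The principal obstacle is maintaining $K_{1,4}$-freeness across the whole construction. A $K_{1,4}$ at a vertex $v$ requires four pairwise non-adjacent neighbors, so every $v$'s open neighborhood must be coverable by at most three cliques; this is easily violated at clause vertices that meet many literal vertices, and at truth-assignment vertices that appear in many clauses. I would enforce it by (i) splitting any would-be high-degree hub into a small clique of low-degree substitutes, each meeting only a few external neighbors; (ii) padding each clause gadget with an internal triangle so that its outside connections fall into at most three pairwise-adjacent clumps; and (iii) adding short bridging edges between literal vertices sharing a clause, so that their common clause-vertex neighbor cannot induce an independent quadruple. The $K_{1,4}$-freeness check then becomes a purely local verification at each vertex.

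Finally, the equivalence between $\phi$ and the well-covered status of $G_\phi$ follows by tracing how the truth-assignment and clause gadgets contribute to any maximal independent set and showing that the balancing subgraph forces the two size values $k$ and $k'$; combined with the routine certificate argument, this yields the stated completeness result. The delicate step, and the one I expect to carry the proof, is coordinating the clause-gadget sizes with the balancing subgraph so that the quantity $k'-k$ is a fixed nonzero constant regardless of which clauses fail.
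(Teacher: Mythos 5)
First, note that the paper does not prove this statement at all: it is quoted verbatim from Caro, Seb\H{o} and Tarsi \cite{cst:structures}, so there is no internal proof to compare yours against. Judged on its own terms, your proposal is a plan rather than a proof, and the part you yourself identify as ``the delicate step'' is exactly the part that is missing. None of the gadgets is actually constructed: the clause gadget, the ``global balancing subgraph,'' and the three devices for restoring $K_{1,4}$-freeness are all described only by the properties you would like them to have. Moreover, at least one of those devices is actively harmful as stated. Adding ``bridging edges between literal vertices sharing a clause'' makes those literal vertices mutually adjacent, so no maximal independent set can select both of them; this destroys the encoding of any truth assignment that sets both corresponding literals true, and with it the forward direction of the reduction. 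Similarly, splitting a high-degree hub into a clique of substitutes changes which sets are maximal independent and by how much they differ in size, so the claimed constants $k$ and $k'$ cannot be asserted without exhibiting the construction and recounting. These are not presentational gaps; the correctness of the whole reduction lives in precisely these details.

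There is also a complexity-class inconsistency you should resolve. The certificate you describe --- two maximal independent sets of different cardinalities --- witnesses that $G$ is \emph{not} well-covered, which places the recognition problem in \textbf{co-NP}, not \textbf{NP}; and indeed the result established in \cite{cst:structures} (and used throughout this paper, e.g.\ in the summary table) is \textbf{co-NP}-completeness. Consequently your reduction must be oriented so that $\phi$ is satisfiable if and only if $G_{\phi}$ is \emph{not} well-covered (equivalently, you reduce the complement of \textbf{3-SAT} to well-coveredness). Your sketch leaves the direction ambiguous (``can be made to correspond to the (un)satisfiability of $\phi$''), and until the gadgets are pinned down one cannot check that the correspondence goes the right way.
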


We use Theorem \ref{wck14npc} to prove the following.

\begin{theorem}
The $\mathbf{GS}$ problem is \textbf{NP-}complete even when its input is
restricted to $K_{1,4}$-free graphs.
\end{theorem}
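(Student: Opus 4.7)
My plan is to give a polynomial-time reduction from the complementary problem of $\mathbf{WC}$, that is, the problem ``given a $K_{1,4}$-free graph $G$, is $G$ not well-covered?'', whose hardness is asserted by Theorem~\ref{wck14npc}, to the $\mathbf{GS}$ problem on $K_{1,4}$-free graphs. Membership in \textbf{NP} is immediate: for any candidate generating bipartite subgraph $B=(B_X,B_Y)$ of the constructed graph, a witness independent set can be checked in polynomial time.

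Given a $K_{1,4}$-free graph $G$, I will construct a $K_{1,4}$-free graph $G'$ by adjoining a small, constant-size gadget to $G$, and designate an induced complete bipartite subgraph $B=(B_X,B_Y)$ of $G'$ living inside the gadget, in such a way that $B$ is generating in $G'$ if and only if $G$ fails to be well-covered. The intended correspondence is that any witness $S$ for $B$ decomposes as $S=(S\cap V(G))\cup S_{\mathrm{gad}}$; the two requirements that $S\cup B_X$ and $S\cup B_Y$ be maximal in $G'$ will force $S\cap V(G)$ to be extendable in two different ways inside the gadget to maximal independent sets of $G$ of \emph{distinct} cardinalities, exactly exhibiting the failure of well-coveredness of $G$. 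Conversely, any two maximal independent sets of $G$ of different sizes can be patched together with appropriate gadget vertices to assemble into a witness for $B$.

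The main obstacle is the explicit design of the gadget subject to the $K_{1,4}$-free restriction. The convenient single-apex construction used in the proof of Theorem~\ref{generatingc345npc}, in which a single vertex $y$ is joined to every clause-vertex $a_j$, is not available here, because such a $y$ would immediately be the centre of an induced $K_{1,4}$ as soon as four of the $a_j$'s are pairwise non-adjacent. The gadget must therefore spread its interaction with $G$ over several low-degree vertices, each joined to at most three pairwise non-adjacent vertices of $G$, so that no new induced $K_{1,4}$ is created. Once the gadget is in place, verifying $K_{1,4}$-freeness of $G'$ will reduce to a local check at each new vertex, and the equivalence ``$B$ generating iff $G$ is not well-covered'' will be established by a direct case analysis on how a witness $S$ interacts with the two sides $B_X$ and $B_Y$ of $B$.
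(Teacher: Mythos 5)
There is a genuine gap here, and it is not merely that the gadget is ``left to the reader.'' The construction you defer is the entire content of your proof, and the correspondence you intend it to realize appears to be unrealizable in principle. A witness that $B=(B_X,B_Y)$ is generating is a \emph{single} independent set $S$ with $S\cup B_X$ and $S\cup B_Y$ both maximal, whereas non-well-coveredness of $G$ is the existence of \emph{two} maximal independent sets of $G$ of different cardinalities. A constant-size gadget attached to $G$ cannot force ``$S\cap V(G)$ is extendable to maximal independent sets of $G$ of distinct cardinalities'': cardinality is a global quantity, invisible to any local check at the gadget, and the two maximality conditions on $S\cup B_X$ and $S\cup B_Y$ differ only in which gadget vertices must be dominated. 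In the converse direction, two maximal independent sets $I_1,I_2$ of $G$ with $|I_1|\neq|I_2|$ generally cannot be ``patched together'' into one witness, since $I_1\cup I_2$ need not be independent; you would need disjoint copies of $G$ and, again, some mechanism for comparing sizes, which no bounded gadget provides. So even granting unlimited ingenuity in gadget design, the reduction you outline from ``is $G$ not well-covered?'' to a single $\mathbf{GS}$ instance does not go through.

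The paper's argument runs in the opposite direction and sidesteps all of this. Since $G$ is $K_{1,4}$-free, every induced complete bipartite subgraph is a $K_{i,j}$ with $1\leq i\leq j\leq 3$, so there are only $O(n^6)$ of them, and $O(n^5)$ unbalanced ones; and $G$ is well-covered if and only if no unbalanced induced complete bipartite subgraph is generating. Hence a polynomial algorithm for the $\mathbf{GS}$ problem on $K_{1,4}$-free graphs would decide well-coveredness of $K_{1,4}$-free graphs by testing each of the polynomially many unbalanced candidates, contradicting Theorem~\ref{wck14npc}. That is, the hardness transfer is a Turing reduction using polynomially many $\mathbf{GS}$ queries, not a one-shot gadget reduction; the two observations you are missing are the boundedness (and hence polynomial enumerability) of complete bipartite subgraphs in $K_{1,4}$-free graphs, and the characterization of well-coveredness via unbalanced generating subgraphs.
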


\begin{proof}
Let $G$ be a $K_{1,4}$-free graph. An induced complete bipartite subgraph of
$G$ is isomorphic to $K_{i,j}$, for $1 \leq i \leq j \leq3$. Hence, the number
of these subgraphs is $O(n^{6})$, which is polynomial. Every unbalanced
induced complete bipartite subgraph of $G$ is a copy of $K_{1,2}$ or $K_{1,3}$
or $K_{2,3}$. The number of these subgraphs is $O(n^{5})$.

Assume, on the contrary, that there exists a polynomial algorithm solving the
$\mathbf{GS}$ problem for $K_{1,4}$-free graphs. The following algorithm
decides in polynomial time whether a $K_{1,4}$-free graph $G$ is well-covered.
For each induced complete bipartite unbalanced subgraph $B$ of $G$ on vertex
sets of bipartition $B_{X}$ and $B_{Y}$, decide in polynomial time whether $B$
is generating. Once an unbalanced generating subgraph is discovered, the
algorithm terminates announcing $G$ is not well-covered. If the algorithm
checked all induced complete bipartite unbalanced subgraphs of $G$, and none
of them is generating, then $G$ is well-covered. Hence, the $\mathbf{WC}$
problem can be solved in polynomial time when its input is restricted to
$K_{1,4}$-free graphs, but that contradicts Theorem \ref{wck14npc}. Thus the
$\mathbf{GS}$ problem is \textbf{NP-}complete, when its input is a $K_{1,4}%
$-free graph.
\end{proof}

\section{Polynomial algorithms when $\Delta$ is bounded}

In this section $G$ will be a graph with $n$ vertices and of maximum degree
$\Delta$. The main findings of this section are polynomial algorithms for the
$\mathbf{RE}$ problem and the $\mathbf{GS}$ problem in the restricted case,
when $\Delta$ is bounded. Our motivation here is the following.

\begin{theorem}
\cite{cer:degree} \label{deltacer} Let $k\in N$. The following problem is
polynomial.\newline Input: A graph $G$ with $\Delta_{G}\leq k\cdot(\log
_{2}n)^{\frac{1}{3}}$, and a function $w:V\left(  G\right)  \longrightarrow
\mathbb{R}$.\newline Question: Is $G$ $w$-well-covered?
\end{theorem}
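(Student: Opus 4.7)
The plan is to reduce $w$-well-coveredness to a weight-balance condition on generating subgraphs, and then use Theorem \ref{deltagenerating} as a polynomial-time subroutine. Specifically, I would first establish the characterization: $G$ is $w$-well-covered if and only if $w(B_X)=w(B_Y)$ for every generating induced complete bipartite subgraph $B$ of $G$. The forward direction is immediate, since if $B$ is generating with witness $S$, then $S\cup B_X$ and $S\cup B_Y$ are maximal independent sets, so the equality of their weights forces $w(B_X)=w(B_Y)$.

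Given the characterization, the algorithm enumerates every induced complete bipartite subgraph $B$ of $G$, uses Theorem \ref{deltagenerating} to decide polynomially whether $B$ is generating, and if so checks the equation $w(B_X)=w(B_Y)$; it returns ``yes'' iff all generating subgraphs pass the test. Since $\Delta_G\le k\cdot(\log_2 n)^{1/3}$, every induced $K_{i,j}\subseteq G$ satisfies $i,j\le \Delta_G$. Fixing one vertex and choosing a subset of its neighbors for the opposite part, and then a subset of the common neighborhood for the remainder of its own part, bounds the total count by $n\cdot 4^{\Delta_G}=n\cdot 2^{2k(\log_2 n)^{1/3}}=n^{1+o(1)}$, which is polynomial. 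Combined with the $O(n^{2+2k^{3}})$ cost per subgraph from Theorem \ref{deltagenerating}, the whole algorithm is polynomial.

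The main obstacle is the reverse direction of the characterization. Given two maximal independent sets $I$ and $J$, I would proceed by induction on $|I\triangle J|$. The base case is immediate, and the inductive step requires extracting, from the ``full swap'' $(I\setminus J,J\setminus I)$ whose induced subgraph need not be complete bipartite, a nonempty $X\subseteq I\setminus J$ and $Y\subseteq J\setminus I$ such that $I'=(I\setminus X)\cup Y$ is a maximal independent set and $G[X\cup Y]$ is complete bipartite with parts $X,Y$. The natural candidate is a pair $(X,Y)$ minimizing $|X|+|Y|$ among nonempty valid sub-swaps; the key claim is that any missing edge between $X$ and $Y$ would permit a strictly smaller valid sub-swap, contradicting minimality. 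Once such $(X,Y)$ is produced, the witness $S=I\setminus X$ certifies that $G[X\cup Y]$ is generating, the hypothesis gives $w(X)=w(Y)$ and hence $w(I)=w(I')$, and since $|I'\triangle J|=|I\triangle J|-|X|-|Y|<|I\triangle J|$, the induction applied to $(I',J)$ completes the argument. The delicate combinatorial verification of this minimality-based decomposition is what I expect to be the genuine difficulty; the algorithmic and enumerative parts are straightforward given Theorem \ref{deltagenerating}.
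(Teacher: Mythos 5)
The paper gives no proof of this statement: it is imported verbatim from \cite{cer:degree}, and the surrounding text uses it only as a benchmark for the stronger Theorem \ref{deltawcw}. Your overall strategy --- reduce $w$-well-coveredness to the balance condition $w(B_X)=w(B_Y)$ over all generating subgraphs, enumerate the at most $n\cdot 4^{\Delta_G}=n^{1+o(1)}$ induced complete bipartite subgraphs, and test each with Theorem \ref{deltagenerating} --- is essentially the strategy the paper itself uses for Theorem \ref{deltawcw}, where the same characterization is invoked without proof in the line ``Clearly, $WCW(G)=\bigcap_{v\in V(G)}L_{v}$''. Your enumeration bound and the forward direction of the characterization are correct.

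The gap is in your proof of the reverse direction of the characterization, and it is not merely ``delicate'': the key claim is false. You propose to find, for maximal independent sets $I\neq J$, nonempty sets $X\subseteq I\setminus J$ and $Y\subseteq J\setminus I$ such that $(I\setminus X)\cup Y$ is maximal independent and $G[X\cup Y]$ is complete bipartite, by taking a valid sub-swap minimizing $|X|+|Y|$. Take $G=C_{6}$ with vertices $v_{1},\dots,v_{6}$ in cyclic order, $I=\{v_{1},v_{4}\}$ and $J=\{v_{2},v_{5}\}$, both maximal independent. A direct check shows the only valid sub-swap is the full swap $(X,Y)=(\{v_{1},v_{4}\},\{v_{2},v_{5}\})$: every proper choice either destroys independence (the edge $v_{1}v_{2}$ or $v_{4}v_{5}$ becomes internal) or leaves $v_{3}$ or $v_{6}$ undominated. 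But $G[\{v_{1},v_{2},v_{4},v_{5}\}]$ is $2K_{2}$, not complete bipartite, so no valid complete-bipartite sub-swap exists and your induction cannot take even one step for this pair. The identity $w(I)=w(J)$ does follow from the generating restrictions of $C_{6}$ --- each $(\{v_{i}\},\{v_{i-1},v_{i+1}\})$ is generating with witness $\{v_{i+3}\}$ (indices mod $6$), and together these force every maximal independent set to have weight $0$ --- but that derivation necessarily uses the vertices $v_{3},v_{6}$ lying outside $I\triangle J$, so any argument confined to sub-swaps inside $I\triangle J$ is doomed. To repair the proof, either cite the characterization as known (it is, and this paper already relies on it in Theorem \ref{deltawcw}), or replace your induction by one in which the intermediate maximal independent sets are allowed to leave $I\cup J$.
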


We prove that the $\mathbf{GS}$ problem is polynomial, when $\Delta$ is
bounded using the same technique as in Theorem \ref{deltacer}.

\begin{theorem}
\label{deltagenerating} Let $k\in N$. The following problem can be solved in
$O(n^{2+2k^{3}})$ time.\newline Input: A graph $G$ such that $\Delta\leq
k\cdot(\log_{2}n)^{\frac{1}{3}}$, and an induced complete bipartite subgraph
$B$ of $G$.\newline Question: Is $B$ generating?
\end{theorem}

\begin{proof}
Let $B$ be an induced complete bipartite subgraph of $G$ on vertex sets of
bipartition $B_{X}$ and $B_{Y}$. For every $V\in\{X,Y\}$, let $U\in
\{X,Y\}-\{V\}$, and define:
\[
M_{1}(B_{V})=N(B_{V})\cap N_{2}(B_{U}),\ M_{2}(B_{V})=N(M_{1}(B_{V}))\cap
N_{2}(B_{V}).
\]
Then $\left\vert M_{1}(B_{V})\right\vert \leq k^{2}(\log_{2}n)^{2/3}$ and
$\left\vert M_{2}(B_{V})\right\vert \leq k^{3}\log_{2}n$. Obviously, $B$ is
generating if and only if there exists an independent set in $M_{2}(B_{X})\cup
M_{2}(B_{Y})$ that dominates $M_{1}(B_{X})\cup M_{1}(B_{Y})$.

The following algorithm decides whether $B$ is generating. For each subset $S$
of $M_{2}(B_{X})\cup M_{2}(B_{Y})$, check \ whether $S$ is \ independent \ and
\ dominates \ $M_{1}(B_{X})\cup M_{1}(B_{Y})$. Once an independent set
$S\subseteq M_{2}(B_{X})\cup M_{2}(B_{Y})$ is found such that $M_{1}%
(B_{X})\cup M_{1}(B_{Y})\subseteq N[S]$, the algorithm terminates announcing
the instance at hand is positive. If all subsets of $M_{2}(B_{X})\cup
M_{2}(B_{Y})$ were checked, and none of them is independent and dominates
$M_{1}(B_{X})\cup M_{1}(B_{Y})$, then the algorithm returns a negative answer.

The number of subsets the algorithm checks is
\[
O(2^{|M_{2}(B_{X})\cup M_{2}(B_{Y})|})=O(2^{2k^{3}\log_{2}n})=O(n^{2k^{3}}).
\]
For each subset $S$, the decision whether $S$ is both independent and
dominates $M_{1}(B_{X})\cup M_{1}(B_{Y})$ can be done in $O(n^{2})$.
Therefore, the algorithm terminates in $O(n^{2+2k^{3}})$ time, which is polynomial.
\end{proof}

We next prove that the $\mathbf{RE}$ problem is polynomial for the less
restrictable bound in comparison with its counterpart from Theorem
\ref{deltagenerating}.

\begin{theorem}
Let $k\in N$. The following problem can be solved in $O(n^{2+2k^{2}})$
time.\newline Input: A graph $G$ such that $\Delta\leq k\cdot(\log
_{2}n)^{\frac{1}{2}}$, and an edge $xy \in E$.\newline Question: Is $xy$ relating?
\end{theorem}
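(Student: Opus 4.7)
The plan is to adapt the proof of Theorem~\ref{deltagenerating} to the special case $B=K_{1,1}$ with $B_X=\{x\}$ and $B_Y=\{y\}$, exploiting the fact that $|B_X|=|B_Y|=1$ to permit the weaker hypothesis $\Delta_G\leq k(\log_2 n)^{1/2}$. Since the edge $xy$ is relating exactly when the induced $K_{1,1}$ on $\{x,y\}$ is generating, the equivalence used in that proof applies: $xy$ is relating if and only if there exists an independent set $S\subseteq M_2(B_X)\cup M_2(B_Y)$ that dominates $M(B_X)\cup M(B_Y)$, where $M$ and $M_2$ are defined exactly as in that proof.

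The key observation is that, because $B_X$ and $B_Y$ are singletons, for $V\in\{X,Y\}$ one has $|N(B_V)|\leq \Delta_G$ and $|N_2(B_V)|\leq \Delta_G^2$. With the hypothesis $\Delta_G\leq k\sqrt{\log_2 n}$, these give
\[
|M(B_V)|\leq \Delta_G \leq k\sqrt{\log_2 n}
\quad\text{and}\quad
|M_2(B_V)|\leq \Delta_G^{2} \leq k^2\log_2 n,
\]
so $|M_2(B_X)\cup M_2(B_Y)|\leq 2k^2\log_2 n$. This is one factor of $\Delta_G$ smaller than the corresponding estimate in Theorem~\ref{deltagenerating}, where $|B_V|$ itself is allowed to grow with $\Delta_G$; it is exactly this saving that makes the weaker assumption on $\Delta_G$ suffice.

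The algorithm is then the natural specialization of the one in Theorem~\ref{deltagenerating}: enumerate every subset $S$ of $M_2(B_X)\cup M_2(B_Y)$, and for each, check in $O(n^2)$ time whether $S$ is independent and whether $M(B_X)\cup M(B_Y)\subseteq N[S]$. It returns YES as soon as a qualifying $S$ is found, and NO otherwise. The number of candidate subsets is at most $2^{2k^2\log_2 n}=n^{2k^2}$, so the overall running time is $O(n^{2+2k^2})$.

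There is no real combinatorial obstacle here: the characterization of relating edges via the sets $M$ and $M_2$ is inherited directly from Theorem~\ref{deltagenerating}. The only care needed is the degree/cardinality bookkeeping, verifying that with $|B_V|=1$ one gains precisely the factor of $\Delta_G$ that converts the exponent $2k^3$ of Theorem~\ref{deltagenerating} into $2k^2$, and the hypothesis $\Delta_G\leq k(\log_2 n)^{1/3}$ into $\Delta_G\leq k(\log_2 n)^{1/2}$.
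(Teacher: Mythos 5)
Your proposal is correct and follows essentially the same route as the paper: the paper's proof defines $M(v)=N(v)\cap N_{2}(u)$ and $M_{2}(v)=N_{2}(v)\cap N_{3}(u)$ for $v\in\{x,y\}$, bounds $|M_{2}(x)\cup M_{2}(y)|\leq 2k^{2}\log_{2}n$, and enumerates all subsets with an $O(n^{2})$ check, exactly as you do. Your added remark explaining why the singleton sides $B_X=\{x\}$, $B_Y=\{y\}$ save a factor of $\Delta_G$ relative to Theorem \ref{deltagenerating} is a correct and useful gloss, but the underlying argument is the same.
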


\begin{proof}
For every $v\in\{x,y\}$, let $u\in\{x,y\}-\{v\}$. Define: $M_{1}(v)=N(v)\cap
N_{2}(u)$, $M_{2}(v)=N(M_{1}(v))\cap N_{2}(v)$. Then $\left\vert
M_{1}(v)\right\vert \leq k\cdot(\log_{2}n)^{\frac{1}{2}}$ and $\left\vert
M_{2}(v)\right\vert \leq k^{2}\log_{2}n$. Clearly, $xy$ is relating if and
only if there exists an independent set in $M_{2}(x)\cup M_{2}(y)$, which
dominates $M_{1}(x)\cup M_{1}(y)$.

The following algorithm decides whether $xy$ is relating. For each subset $S$
of $M_{2}(x)\cup M_{2}(y)$, check whether $S$ is independent and dominates
$M_{1}(x)\cup M_{1}(y)$. Once \ an \ independent \ set \ \ $S\subseteq
M_{2}(x)\cup M_{2}(y)$ \ \ is \ found \ such \ that \ \ $M_{1}(x)\cup
M_{1}(y)\subseteq N[S]$, the algorithm terminates announcing the instance at
hand is positive. If all subsets of $M_{2}(x)\cup M_{2}(y)$ were checked, and
none of them is both independent and dominates $M_{1}(x)\cup M_{1}(y)$, then
the algorithm returns a negative answer.

The number of subsets the algorithm checks is
\[
O(2^{|M_{2}(x)\cup M_{2}(y)|})=O(2^{2k^{2}\log_{2}n})=O(n^{2k^{2}}).
\]
For each subset $S$, the decision whether $S$ is both independent and
dominates $M_{1}(x)\cup M_{1}(y)$ can be done in $O(n^{2})$ time. Therefore,
the algorithm terminates in $O(n^{2+2k^{2}})$ time.
\end{proof}

In what follows, our purpose is both to formalize and to give a detailed proof
of a claim mentioned in \cite{cer:degree}.

\begin{theorem}
\label{deltawcw} Let $k\in N$. The following problem can be solved in
$O(n^{3+2k^{2}+2k^{3}})$ time.\newline Input: A graph $G$ such that
$\Delta\leq k\cdot(\log_{2}n)^{\frac{1}{3}}$.\newline Output: The vector space
$WCW(G)$.
\end{theorem}

\begin{proof}
Let $G$ be a graph such that $\Delta\leq k\cdot(\log_{2}n)^{\frac{1}{3}}$. For
every vertex $v\in V$, let $L_{v}$ be the vector space of all weight functions
$w:V\left(  G\right)  \longrightarrow\mathbb{R}$ which satisfy all
restrictions of all generating subgraphs which contain the vertex $v$.
Clearly, $WCW(G)={\bigcap\limits_{v\in V\left(  G\right)  }}L_{v}$. Hence, we
first present an algorithm for finding $L_{v}$ for every $v\in V$.

Let $v\in V$. Since the diameter of every complete bipartite graph is at most
$2$, every complete bipartite subgraph of $G$ which contains $v$ is a subgraph
of $N_{2}[v]$. However,
\[
\left\vert N_{2}(v)\right\vert \leq\Delta^{2}\leq k^{2}(\log_{2}n)^{\frac
{2}{3}},
\]
and
\[
\left\vert N_{2}[v]\right\vert \leq2\left\vert N_{2}(v)\right\vert \leq
2k^{2}(\log_{2}n)^{\frac{2}{3}}.
\]
Therefore, the number of induced complete bipartite subgraphs which contain
$v$ cannot exceed
\[
2^{2k^{2}(\log_{2}n)^{\frac{2}{3}}}\leq n^{2k^{2}}.
\]

The following algorithm finds $L_{v}$:

\begin{itemize}
\item For each induced complete bipartite subgraph $B=(B_{X},B_{Y})$ of $G$
containing $v$:

\begin{itemize}
\item Decide whether $B$ is generating;

\item If $B$ is generating add the restriction $w(B_{X})=w(B_{Y})$ to the list
of equations defining $L_{v}$.
\end{itemize}
\end{itemize}

We have proved that the number of induced complete bipartite subgraphs of $G$
containing $v$ cannot exceed $n^{2k^{2}}$. By Theorem \ref{deltagenerating},
deciding for each subgraph whether it is generating can be done in
$O(n^{2+2k^{3}})$ time. Therefore, the algorithm for finding $L_{v}$
terminates in $O(n^{2+2k^{2}+2k^{3}})$ time. In order to find $WCW(G)$, the
algorithm for finding $L_{v}$ should be invoked $n$ times. Therefore, finding
$WCW(G)$ can be completed in $O(n^{3+2k^{2}+2k^{3}})$ time.
\end{proof}

\section{Conclusions and future work}

The following table presents complexity results concerning the four major
problems presented in this paper. The empty table cells correspond to unsolved
cases. In particular, we want to find the complexity status of the
$\mathbf{WCW}$ problem for bipartite graphs and for graphs with girth $6$ at
least. For these families of graphs the $\mathbf{GS}$ problem is
\textbf{NP-}complete while the $\mathbf{WC}$ problem is polynomial. Hence,
either we obtain a family of graphs for which the $\mathbf{WC}$ problem is
polynomial while the $\mathbf{WCW}$ problem is \textbf{co}-\textbf{NP-}hard,
or we obtain a family of graphs for which the $\mathbf{GS}$ problem is
\textbf{NP-}complete while the $\mathbf{WCW}$ problem is polynomial.

In addition, we are interested in finding some polynomial relaxations of the
bipartite case, if any. For instance, can recognizing well-covered graphs
belonging to $\mathcal{G}(\widehat{C_{3}},\widehat{C_{5}})$ be done polynomially?

Let us emphasize that we do not know whether there exists a family of graphs
for which the $\mathbf{RE}$ problem can be solved in polynomial time, but the
$\mathbf{GS}$ problem is \textbf{NP-}complete.

\begin{center}
\begin{table}[ptb]%
\begin{tabular}
[c]{|c|c|c|c|c|}\hline
\textbf{{Input}} & $\mathbf{WC}$ & $\mathbf{{WCW}}$ & $\mathbf{RE}$ &
$\mathbf{GS}$\\\hline%
\begin{tabular}
[c]{c}%
\\
general
\end{tabular}
&
\begin{tabular}
[c]{c}%
\textbf{co}-\textbf{NPC}\\
\cite{cs:note,sknryn:compwc}%
\end{tabular}
&
\begin{tabular}
[c]{c}%
\textbf{co}-\textbf{NPH}\\
\cite{cs:note,sknryn:compwc}%
\end{tabular}
&
\begin{tabular}
[c]{c}%
\textbf{NPC}\\
\cite{bnz:wcc4}%
\end{tabular}
&
\begin{tabular}
[c]{c}%
\textbf{NPC}\\
\cite{bnz:wcc4}%
\end{tabular}
\\\hline%
\begin{tabular}
[c]{c}%
\\
$K_{1,3}$-free
\end{tabular}
&
\begin{tabular}
[c]{c}%
\textbf{P}\\
\cite{tata:wck13f}%
\end{tabular}
&
\begin{tabular}
[c]{c}%
\textbf{P}\\
\cite{lt:equimatchable}%
\end{tabular}
&
\begin{tabular}
[c]{c}%
\textbf{P}\\
\cite{tata:wck13fn}%
\end{tabular}
&
\begin{tabular}
[c]{c}%
\textbf{P}\\
\cite{tata:wck13fn}%
\end{tabular}
\\\hline%
\begin{tabular}
[c]{c}%
\\
$K_{1,4}$-free
\end{tabular}
&
\begin{tabular}
[c]{c}%
\textbf{co}-\textbf{NPC}\\
\cite{cst:structures}%
\end{tabular}
&
\begin{tabular}
[c]{c}%
\textbf{co}-\textbf{NPH}\\
\cite{cst:structures}%
\end{tabular}
&
\begin{tabular}
[c]{c}%
\\
\end{tabular}
&
\begin{tabular}
[c]{c}%
\textbf{NPC}\\
this paper
\end{tabular}
\\\hline%
\begin{tabular}
[c]{c}%
\\
$\mathcal{G}(\widehat{C_{4}},\widehat{C_{5}})$%
\end{tabular}
&
\begin{tabular}
[c]{c}%
\textbf{P}\\
\cite{fhn:wc45}%
\end{tabular}
&  &
\begin{tabular}
[c]{c}%
\textbf{NPC}\\
\cite{lt:relatedc4}%
\end{tabular}
&
\begin{tabular}
[c]{c}%
\textbf{NPC}\\
\cite{lt:relatedc4}%
\end{tabular}
\\\hline%
\begin{tabular}
[c]{c}%
\\
$\mathcal{G}(\widehat{C_{4}},\widehat{C_{6}})$%
\end{tabular}
&
\begin{tabular}
[c]{c}%
\\
\end{tabular}
&
\begin{tabular}
[c]{c}%
\\
\end{tabular}
&
\begin{tabular}
[c]{c}%
\textbf{P}\\
\cite{lt:relatedc4}%
\end{tabular}
&
\begin{tabular}
[c]{c}%
\\
\end{tabular}
\\\hline%
\begin{tabular}
[c]{c}%
\\
$\mathcal{G}(\widehat{C_{5}},\widehat{C_{6}})$%
\end{tabular}
&
\begin{tabular}
[c]{c}%
\\
\end{tabular}
&
\begin{tabular}
[c]{c}%
\\
\end{tabular}
&
\begin{tabular}
[c]{c}%
\textbf{P}\\
\cite{lt:wwc456}%
\end{tabular}
&
\begin{tabular}
[c]{c}%
\\
\end{tabular}
\\\hline%
\begin{tabular}
[c]{c}%
\\
$\mathcal{G}(\widehat{C_{5}},\widehat{C_{6}},\widehat{C_{7}})$%
\end{tabular}
&
\begin{tabular}
[c]{c}%
\\
\end{tabular}
&
\begin{tabular}
[c]{c}%
\\
\end{tabular}
&
\begin{tabular}
[c]{c}%
\textbf{P}\\
\cite{lt:wwc456}%
\end{tabular}
&
\begin{tabular}
[c]{c}%
\textbf{P}\\
\cite{lt:wwc456}%
\end{tabular}
\\\hline%
\begin{tabular}
[c]{c}%
\\
$\mathcal{G}(\widehat{C_{4}},\widehat{C_{5}},\widehat{C_{6}})$%
\end{tabular}
&
\begin{tabular}
[c]{c}%
\textbf{P}\\
\cite{fhn:wc45}%
\end{tabular}
&
\begin{tabular}
[c]{c}%
\textbf{P}\\
\cite{lt:wwc456}%
\end{tabular}
&
\begin{tabular}
[c]{c}%
\textbf{P}\\
\cite{lt:wwc456}%
\end{tabular}
&
\begin{tabular}
[c]{c}%
\textbf{P}\\
\cite{lt:wwc456}%
\end{tabular}
\\\hline%
\begin{tabular}
[c]{c}%
\\
$\mathcal{G}(\widehat{C_{4}},\widehat{C_{6}},\widehat{C_{7}})$%
\end{tabular}
&
\begin{tabular}
[c]{c}%
\\
\end{tabular}
&
\begin{tabular}
[c]{c}%
\\
\end{tabular}
&
\begin{tabular}
[c]{c}%
\textbf{P}\\
\cite{lt:wc4567}%
\end{tabular}
&
\begin{tabular}
[c]{c}%
\textbf{P}\\
\cite{lt:wc4567}%
\end{tabular}
\\\hline%
\begin{tabular}
[c]{c}%
\\
bipartite
\end{tabular}
&
\begin{tabular}
[c]{c}%
\textbf{P}\\
\cite{ravindra:well-covered}%
\end{tabular}
&  &
\begin{tabular}
[c]{c}%
\textbf{NPC}\\
this paper
\end{tabular}
&
\begin{tabular}
[c]{c}%
\textbf{NPC}\\
this paper
\end{tabular}
\\\hline%
\begin{tabular}
[c]{c}%
\\
$\mathcal{G}(\widehat{C_{3}},\widehat{C_{4}})$%
\end{tabular}
&
\begin{tabular}
[c]{c}%
\textbf{P}\\
\cite{fhn:wcg5}%
\end{tabular}
&  &
\begin{tabular}
[c]{c}%
\\
\end{tabular}
&
\begin{tabular}
[c]{c}%
\textbf{NPC}\\
this paper
\end{tabular}
\\\hline%
\begin{tabular}
[c]{c}%
\\
$\mathcal{G}(\widehat{C_{3}},\widehat{C_{4}},\widehat{C_{5}})$%
\end{tabular}
&
\begin{tabular}
[c]{c}%
\textbf{P}\\
\cite{fhn:wcg5}%
\end{tabular}
&  &
\begin{tabular}
[c]{c}%
\\
\end{tabular}
&
\begin{tabular}
[c]{c}%
\textbf{NPC}\\
this paper
\end{tabular}
\\\hline%
\begin{tabular}
[c]{c}%
\\
$\Delta\leq k(\log_{2}n)^{\frac{1}{3}}$%
\end{tabular}
&
\begin{tabular}
[c]{c}%
\textbf{P}\\
\cite{cer:degree}%
\end{tabular}
&
\begin{tabular}
[c]{c}%
\textbf{P}\\
\cite{cer:degree}%
\end{tabular}
&
\begin{tabular}
[c]{c}%
\textbf{P}\\
\cite{cer:degree} and this paper
\end{tabular}
&
\begin{tabular}
[c]{c}%
\textbf{P}\\
\cite{cer:degree} and this paper
\end{tabular}
\\\hline%
\begin{tabular}
[c]{c}%
\\
$\Delta\leq k(\log_{2}n)^{\frac{1}{2}}$%
\end{tabular}
&
\begin{tabular}
[c]{c}%
\\
\end{tabular}
&
\begin{tabular}
[c]{c}%
\\
\end{tabular}
&
\begin{tabular}
[c]{c}%
\textbf{P}\\
\cite{cer:degree} and this paper
\end{tabular}
&
\begin{tabular}
[c]{c}%
\\
\end{tabular}
\\\hline
\end{tabular}
\caption{Complexity results on the $4$ problems.}%
\end{table}
\end{center}

Another interesting open question is whether there exists a family of graphs
for which the $\mathbf{GS}$ problem is polynomial and its corresponding
$\mathbf{WCW}$ problem is \textbf{co}-\textbf{NP-}hard.

\begin{figure}[h]
\setlength{\unitlength}{1.0cm} \begin{picture}(15,5)\thicklines
\put(3,2.5){\circle*{0.1}}
\put(3,2.8){\makebox(0,0){$v_{1}$}}
\multiput(4.5,1)(1.5,0){3}{\circle*{0.1}}
\multiput(4.5,4)(1.5,0){3}{\circle*{0.1}}
\put(4.5,0.7){\makebox(0,0){$v_{2}$}}
\put(6,0.7){\makebox(0,0){$v_{4}$}}
\put(7.5,0.7){\makebox(0,0){$v_{6}$}}
\put(4.5,4.3){\makebox(0,0){$v_{3}$}}
\put(6,4.3){\makebox(0,0){$v_{5}$}}
\put(7.5,4.3){\makebox(0,0){$v_{7}$}}
\put(3,2.5){\line(1,-1){1.5}}
\put(3,2.5){\line(1,1){1.5}}
\put(4.5,1){\line(1,0){3}}
\put(4.5,4){\line(1,0){3}}
\put(7.5,1){\line(0,1){3}}
\put(7,2.5){\circle*{0.1}}
\put(7,2.8){\makebox(0,0){$v_{8}$}}
\put(9,3){\circle*{0.1}}
\put(9,2){\circle*{0.1}}
\put(9.4,3){\makebox(0,0){$v_{9}$}}
\put(9.4,2){\makebox(0,0){$v_{10}$}}
\put(7.5,1){\line(-1,3){0.5}}
\put(7,2.5){\line(-2,3){1}}
\put(7.5,1){\line(3,4){1.5}}
\put(7,2.5){\line(4,-1){2}}
\put(7.5,1){\line(3,2){1.5}}
\put(7.5,4){\line(3,-2){1.5}}
\multiput(2.43,0.5)(0,-0.02){4}{\line(1,0){3.15}}
\multiput(2.43,4.5)(0,0.02){4}{\line(1,0){3.15}}
\multiput(2.5,0.5)(-0.02,0){4}{\line(0,1){4}}
\multiput(5.5,0.5)(0.02,0){4}{\line(0,1){4}}
\put(4.3,2.5){\makebox(0,0){$\bf{B}$}}
\end{picture}
\caption{The failure of the naive algorithm.}%
\label{Fig2}%
\end{figure}
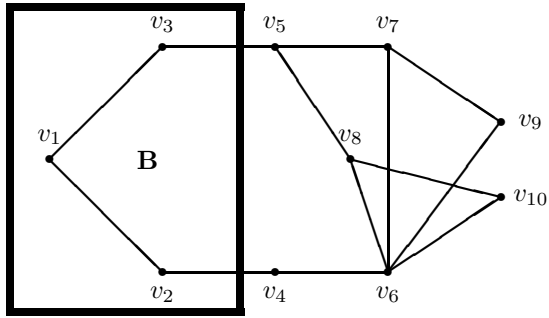

The naive algorithm for the $\mathbf{GS}$ problem, receives as its input an
instance $I=(G,B=(B_{X},B_{Y}))$. Then it finds $WCW(G)$. If there exists a
weight function $w\in WCW(G)$ such that $w(B_{X})\neq w(B_{Y})$, then $B$ is
not generating, and consequently, $I$ is negative. Otherwise, $I$ is positive.
For every family $\Psi$ of graphs, if the $\mathbf{WCW}$ problem can be solved
polynomially, then the naive algorithm for the $\mathbf{GS}$ problem
terminates polynomially.

However, the naive algorithm fails, when its input is $(G,B)$, where $G$ is
the graph shown in Figure \ref{Fig2}, and $B$ is the subgraph induced by
$\{v_{1},v_{2},v_{3}\}$. A function $w:V\left(  G\right)  \longrightarrow
\mathbb{R}$ belongs to $WCW(G)$ if and only if the following conditions hold:

\begin{itemize}
\item $w(v_{7})=w(v_{9})$

\item $w(v_{8})=w(v_{10})$

\item $w(v_{6})=w(v_{9})+w(v_{10})$

\item $w(v_{i}) = 0$ for every $1 \leq i \leq5$.
\end{itemize}

Hence, $w(v_{1})=w(v_{2})+w(v_{3})$ for every $w\in WCW(G)$, and the naive
algorithm decides that $B$ is generating, although it is not.


\begin{thebibliography}{99}                                                                                               %


\bibitem {bn:wcchordal}J. I. Brown, R. J. Nowakowski, \emph{Well covered
vector spaces of graphs}, SIAM Journal on Discrete Mathematics \textbf{19}
(2006) 952--965.

\bibitem {bnz:wcc4}J. I. Brown, R. J. Nowakowski, I. E. Zverovich, \emph{The
structure of well-covered graphs with no cycles of length 4}, Discrete
Mathematics \textbf{307} (2007) 2235-2245.

\bibitem {cer:degree}Y. Caro, N. Ellingham, G. F. Ramey, \emph{Local structure
when all maximal independent sets have equal weight}, SIAM Journal on Discrete
Mathematics \textbf{11} (1998) 644-654.

\bibitem {cst:structures}Y. Caro, A. Seb\H{o}, M. Tarsi, \emph{Recognizing
greedy structures}, Journal of Algorithms \textbf{20} (1996) 137-156.

\bibitem {cs:note}V. Chvatal, P. J. Slater, \emph{A note on well-covered
graphs}, Quo Vadis, Graph Theory?, Annals of Discrete Mathematics \textbf{55},
North Holland, Amsterdam (1993) 179-182.

\bibitem {fhn:wcg5}A. Finbow, B. Hartnell, R. Nowakowski, \emph{A
characterization of well-covered graphs of girth 5 or greater}, Journal of
Combinatorial Theory B \textbf{57} (1993) 44-68.

\bibitem {fhn:wc45}A. Finbow, B. Hartnell, R. Nowakowski, \emph{A
characterization of well-covered graphs that contain neither 4- nor 5-cycles},
Journal of Graph Theory \textbf{18} (1994) 713-721.

\bibitem {lt:wc4567}V. E. Levit, D. Tankus \emph{Weighted well-covered graphs
without $C_{4}$, $C_{5}$, $C_{6}$, $C_{7}$}, Discrete Applied Mathematics
\textbf{159} (2011) 354-359.

\bibitem {lt:relatedc4}V. E. Levit, D. Tankus, \emph{On relating edges in
graphs without cycles of length 4}, Journal of Discrete Algorithms \textbf{26}
(2014) 28-33.

\bibitem {lt:equimatchable}V. E. Levit, D. Tankus, \emph{Weighted well-covered
claw-free graphs}, Discrete Mathematics \textbf{338} (2015) 99-106.

\bibitem {lt:wwc456}V. E. Levit, D. Tankus, \emph{Well-covered graphs without
cycles of lengths 4, 5 and 6}, Discrete Applied Mathematics \textbf{186}
(2015) 158-167.

\bibitem {plummer:definition}M. D. Plummer, \emph{Some covering concepts in
graphs}, Journal of Combinatorial Theory \textbf{8} (1970) 91-98.

\bibitem {ptv:chordal}E. Prisner, J. Topp and P. D. Vestergaard,
\emph{Well-covered simplicial, chordal and circular arc graphs}, Journal of
Graph Theory \textbf{21} (1996) 113--119.

\bibitem {ravindra:well-covered}G. Ravindra, \emph{Well-covered graphs},
Journal of Combinatorics, Information and System Sciences \textbf{2} (1977) 20-21.

\bibitem {sknryn:compwc}R. S. Sankaranarayana, L. K. Stewart, \emph{Complexity
results for well-covered graphs}, Networks \textbf{22} (1992) 247-262.

\bibitem {tata:wck13f}D. Tankus, M. Tarsi, \emph{Well-covered claw-free
graphs}, Journal of Combinatorial Theory B \textbf{66} (1996) 293-302.

\bibitem {tata:wck13fn}D. Tankus, M. Tarsi, \emph{The structure of
well-covered graphs and the complexity of their recognition problems}, Journal
of Combinatorial Theory B \textbf{69} (1997) 230-233.

\bibitem {yaka:coteries}M. Yamashita, T. Kameda, \emph{Modeling k-coteries by
well-covered graphs}, Networks \textbf{34} (1999) 221--228.
\end{thebibliography}
\end{document}